\DeclareMathAlphabet\rsfscr{U}{rsfso}{m}{n}
\def \NP   	{\mathbf{NP}}
\def \DSPACE 	{\mathbf{DSPACE}}
\def \IP        {\mathbf{IP}}
\def \PSPACE    {\mathbf{PSPACE}}
\def \VDF       {\mathbf{VDF}}
\def \TM    	{\mathsf{TM}}
\def \PTM    	{\mathsf{PTM}}
\def \O     	{\mathcal{O}}
\def \Z 	{\mathbb{Z}}
\def \F 	{\mathbb{F}}
\def \L 	{\mathcal{L}}
\def \P     	{\mathsf{P}}
\def \HALT   	{\textsc{SPACEHALT}}
\def \TQBF   	{\textsc{TQBF}}
\def \RO	{\mathsf{H}}
\def \adv       {\mathcal{A}}
\def \adf       {\mathcal{B}}
\def \pp   	{pp}
\def \setup	{\mathsf{Setup}}
\def \eval	{\mathsf{Eval}}
\def \vdf	{\textsc{VDF}}
\def \verify	{\mathsf{Verify}}
\def \ips	{(\prv \leftrightarrow  \vrf)}
\def \vdfs	{\langle \prv \leftrightarrow  \vrf \rangle}
\def \advs	{\langle \adv \leftrightarrow  \vrf \rangle}
\def \adfs	{\langle \adf \leftrightarrow  \vrf \rangle}
\def \negl	{\mathtt{negl}}
\def \poly	{\mathtt{poly}}
\def \st	{\mathtt{state}}
\def \mod       {\; \mathbf{mod} \;}
\def \multgroup#1{(\mathbb{Z}/#1\mathbb{Z})^\times}
\def \vrf       {\mathcal{V}}
\def \prv       {\mathcal{P}}
\def \X         {\mathcal{X}}
\def \Y         {\mathcal{Y}}
\begin{document}
\title{$\VDF\subsetneq\PSPACE$}
%

%
\author{Souvik Sur}

\institute{
\email{souviksur@gmail.com}
}
\titlerunning{$\VDF\subsetneq\PSPACE$}
\maketitle              

\begin{abstract}
Verifiable delay functions ($\vdf)$ are functions that take
a specified number of sequential steps to be evaluated but can be verified efficiently. 
In this paper, we show that every $\vdf$ is provable in $\PSPACE$ but
every language in $\PSPACE$ does \emph{not} admit to a $\vdf$.

\keywords{
Verifiable delay functions 
\and Sequentiality
\and Turing machine
\and Space-time hierarchy
}
\end{abstract}

\section{Introduction}\label{introduction}
 In 1992, Dwork and Naor introduced the very first notion 
 of $\vdf$ under a different nomenclature ``pricing function" \cite{Dwork1992Price}.
 It is a computationally hard puzzle that needs to be solved
 to send a mail, whereas the solution of the puzzle can be verified efficiently. 
 Later, the concept of verifiable delay functions was
 formalized in~\cite{Dan2018VDF}.

Given the security
parameter $\lambda$ and delay parameter $T$, the prover needs 
to evaluate the $\vdf$ in time $T$. The verifier
verifies the output in $\poly(\lambda,\log T)$-time 
using some proofs produced by the prover. 
A crucial property of $\vdf$s, namely sequentiality,
ensures that the output can not be computed in time much less than $T$ even in the
presence of $\poly(\lambda,T)$-parallelism. 
$\vdf$s have several applications ranging
from non-interactive time-stamping to resource-efficient blockchains, however, are really
rare in practice because of the criteria sequentiality.
In order to design new $\vdf$s we must find problems that offer sequentiality. 
To the best of our knowledge so far, 
all the practical $\vdf$s are based on two inherently sequential algebraic problems 
-- modular exponentiation in groups of unknown
order~\cite{Pietrzak2019Simple,Wesolowski2019Efficient} (fundamentally known as the time-lock puzzle
\cite{Rivest1996Time}) and isogenies over super-singular curves~\cite{Feo2019Isogenie}. 
The security proofs of these $\vdf$s
are essentially polynomial-time reductions from one of these assumptions to the corresponding $\vdf$s.
Thus from the perspective of designers the first hurdle is to find inherently sequential
problems. 
 
The main motivation behind this study has been where should we search for
such inherently sequential problems in order to design new $\vdf$s?
Surprisingly, in this paper, we show that there exist a sequential algorithm to solve
each problem in $\PSPACE$. But we can not derive a $\vdf$ out of each of this problem.
In fact, quite similarly, breaking a strong belief, Mahmoody et al. shows that
the sequence of recursive responses of a random oracle is sequential, 
but no perfectly sound $\vdf$ can be designed using such a sequence only \cite{Mahmoody2020RO}.

\subsection{Proof Sketch}\label{contributions}
We show that the class of all $\vdf$s, $\VDF\subsetneq\PSPACE$ in two parts. In order to proof the inclusion 
$\VDF \subseteq \PSPACE$ first we model $\vdf$s as a special case of interactive proofs,
thus $\VDF \subseteq \IP$. Therefore $\VDF \subseteq \PSPACE$ by the virtue of the
seminal result by Shamir $\IP = \PSPACE$ \cite{ShamirIP}.
We show the opposite exclusion $\PSPACE \nsubseteq \VDF$ by two incorrect attempts to
build $\vdf$s from two different $\PSPACE$-complete languages. The most important
finding is that all problems in $\PSPACE$ do not turn out to be subexponentially
sequential and computationally sound, at the same time.

We consider \textsc{True-Quantified-Boolean-Formula} ($\TQBF$ in short) and \textsc{SPACEHALT} as 
these $\PSPACE$-complete problems. 
The language $\TQBF$ is the set of fully quantified Boolean formula that are true. The
sumcheck protocol for $\TQBF$ is known to be sound against even computationally
unbounded provers. We show that despite its soundness sumcheck protocol fails to achieve
subexponential sequentiality which is necessary for any $\vdf$. 

The language $\HALT$ is the set of all the tuples $(M,x,1^S)$ such that the deterministic 
Turing machine $M$ halts on input $x$ in space $S$. 
It is a $\PSPACE$-complete language as any language $\L \in \PSPACE$ can be reduced
to $\HALT$ in polynomial-time. For any $x\in \L$, the reduction $\L \le_p \HALT$ is nothing but
$f(x) = (M,x,1^{\O(|x|)})$. Moreover, $\HALT$ is an inherently sequential language in a
sense that $\HALT$ can not be parallelized. If it could be then all the languages in 
$\PSPACE$ could be parallelized by the definition of completeness of a language for a class.
But we already know the existence of inherently sequential languages
(e.g., time-lock puzzle) that can be recognized in polynomial space. 

We prove that $\vdf$ derived from $\HALT$ may be subexponentially sequential but not
computationally sound.

\section{Related Work}\label{literature}
In this section, we mention some well-known schemes qualified as $\vdf$s.

The pricing function by Dwork--Naor scheme~\cite{Dwork1992Price} asks a prover,
given a prime $p\equiv3\; (\mathbf{mod}\; 4)$ and a quadratic residue $x$ modulo $p$,
to find a $y$ such that $y^2\equiv x\;(\mathbf{mod}\;p)$. The prover
has no other choice other than using the identity $y\equiv x^{\frac{(p+1)}{4}}\;(\mathbf{mod}\;p)$, 
but the verifier verifies the correctness using $y^2\equiv x\;(\mathbf{mod}\;p)$. 
Evidently, it is difficult to generate difficult instances of this $\vdf$
without using larger primes $p$. Further the massive parallelism with the prover 
violates its sequentiality.

In 2018, Dan et al.~\cite{Dan2018VDF} propose a $\vdf$ based on injective rational
maps of degree $T$, where the fastest possible inversion is to compute the polynomial
GCD of degree-$T$ polynomials. They conjecture that 
it achieves $(T^2,o(T))$ sequentiality using permutation polynomials as the candidate map.
However, it is a weak form of $\vdf$ as the prover needs $\O(T)$-parallelism in order to
evaluate the $\vdf$ in time $T$.

Rivest, Shamir, and Wagner~\cite{Rivest1996Time} introduced another discipline 
of $\vdf$s known as time-lock puzzle.
These puzzles enables an encryption that can be decrypted 
only sequentially. Starting with $N=pq$ such that $p,q$ are large primes,
the key $y$ is enumerated as $y\equiv x^{2^T}\;(\mathbf{mod}\;N)$. Then the verifier,
uses the value of $\phi(N)$ to reduce the exponent to
$e\equiv {2^T}\;(\mathbf{mod}\;\phi(N))$ and finds out $y\equiv x^e\;(\mathbf{mod}\;N)$.
On the contrary, without the knowledge of $\phi(N)$, the only option available to the prover
is to raise $x$ to the power $2^T$ sequentially. 
As the verification stands upon a secret, the knowledge of $\phi(N)$, 
it is not a VDF as verification should depend only on public parameters.

Wesolowski~\cite{Wesolowski2019Efficient} and Pietrzak~\cite{Pietrzak2019Simple} circumvent 
this issue independently. The first one asks the prover to compute an output $y=x^{2^T}$ and
a proof $\pi=x^{\lfloor2^T/l\rfloor}$, where $l$ is a $2\lambda$-bit prime chosen at random. The verifier
checks if $y=\pi^l \cdot x^{(2^T\; \mathbf{mod}\; l)}$. 
Hence the verification needs at most $2\log \ell=4 \lambda$ squaring.
Two candidate groups suits well in 
this scheme -- an RSA group $\multgroup{N}$, and
the class group of an imaginary quadratic number field.
This $\vdf$ shines for its short proof which is a single element in underlying group. 

Pietrzak's $\vdf$ exploits the identity $z^ry=(x^rz)^{2^{T/2}}$
where $y=x^{2^T}$, $z=x^{2^{T/2}}$ and $r \in \{1, \ldots, 2^\lambda\}$ is chosen at random.
So the prover is asked to compute the proof $\pi=\{u_1, u_2, \ldots, u_{\log T}\}$ such that 
$u_i=x^{r_i+{2^{T/2^i}}}$. The verifier computes the $v_i=x^{r_i\cdot{2^{T/2^i}}+2^T}$ and checks if $v_i=u_i^2$.
So the verifier needs $O(\log{T})$ time.
Trading-off the size of the proof it optimizes the effort to generate the proof $\pi$ in $O(\sqrt{T} \log T)$. 
As a comparison, Wesolowski's $\vdf$ needs $\O(T/\log T)$ time to do the same.
This $\vdf$ uses the RSA group and the class groups of imaginary quadratic number fields.

Feo et al.~\cite{Feo2019Isogenie} presents two $\vdf$s based on isogenies of super-singular
elliptic curves. They start with five groups $\langle G_1,G_2,G_3,G_4,G_5\rangle$
of prime order $N$ with two non-degenerate bilinear pairing maps
$e_{12}: G_1 \times G_2 \rightarrow G_5$ and $e_{34}: G_3 \times G_4 \rightarrow G_5$.
Also there are two group isomorphisms
$\phi: G_1 \rightarrow G_3$ and $\overline{\phi}: G_4 \rightarrow G_2$. 
Given all the above descriptions as the public parameters along with a generator $P\in G_1$,
the prover needs to find $\overline{\phi}(Q)$, where $Q\in G_4$, using $T$ sequential steps.
The verifier checks if $e_{12}(P,\overline{\phi}(Q))=e_{34}(\phi(P),Q)$ in $\poly(\log{T})$
time. 
It runs on super-singular curves over $\mathbb{F}_p$ and $\mathbb{F}_{p^2}$ 
as two candidate groups. 
While being inherently non-interactive, a major drawback 
as mentioned by the authors themselves is that its
 setup may be as slow as the evaluation.

Mahmoody et al.~\cite{Mahmoody2020RO}
have recently ruled out the possibility of having perfectly unique $\vdf$s using random oracles only.

 \begin{table}[h]
 \caption{Comparison among the existing VDFs. $T$ is the targeted time bound, $\lambda$ is 
 the security parameter, $\Delta$ is the number of processors. All the quantities may be
 subjected to $\mathcal{O}$-notation, if needed.}
 \label{tab : VDF}
  \centering
  \begin{tabular}{|l@{\quad}|r@{\quad}|r@{\quad}|r@{\quad}|r@{\quad}|r@{\quad}|r@{\quad}}
     \hline
         VDF & \textsf{Eval} & \textsf{Eval} &  \textsf{Verify} & \textsf{Setup} & Proof   \\
 (by authors)&  Sequential   & Parallel      &                  &                &  size   \\
     \hline
     
     Dwork and Naor~\cite{Dwork1992Price}       & $T$   & $T^{2/3}$  &  $T^{2/3}$  & $T$ & $\textendash$ \\ 
     [0.3 em] \hline 
         
     Dan et al.~\cite{Dan2018VDF}         & $T^2$ & $>T-o(T)$  &  $\log{T}$  & $\log{T}$ & $\textendash$  \\
     [0.3 em] \hline
     
     Wesolowski~\cite{Wesolowski2019Efficient} & $(1+\frac{2}{\log{T}})T$   & $(1+\frac{2}{\Delta\log{T}})T$  &  $\lambda^{4}$  & $\lambda^{3}$ & $\lambda^{3}$ \\
     [0.3 em] \hline
     
     Pietrzak~\cite{Pietrzak2019Simple}        & $(1+\frac{2}{\sqrt{T}})T$   & $(1+\frac{2}{\Delta\sqrt{T}})T$  &  $\log{T}$  & $\lambda^{3}$ & $\log{T}$ \\
     [0.3 em] \hline 
     
     Feo et al.~\cite{Feo2019Isogenie}         & $T$   & $T$  &  $\lambda^4$  & $T\log{\lambda}$ & \textendash \\
     [0.3 em] \hline 
     
  \end{tabular}
 \end{table}

\section{Preliminaries}\label{preliminaries}

We start with the notations.

\subsection{Notations}
We denote the security parameter with $\lambda\in\mathbb{Z}^+$.
The term $\poly(\lambda)$ refers to some polynomial of $\lambda$, and
$\negl(\lambda)$ represents some function $\lambda^{-\omega(1)}$.
If any randomized algorithm $\mathcal{A}$ outputs $y$ on an input $x$, 
we write $y\xleftarrow{R}\mathcal{A}(x)$. By $x\xleftarrow{\$}\mathcal{X}$,
we mean that $x$ is sampled uniformly at random from $\mathcal{X}$. For a string $x$, 
$|x|$ denotes the bit-length of $x$, whereas for any set $\mathcal{X}$, $|\mathcal{X}|$ denotes 
the cardinality of the set $\mathcal{X}$. If $x$ is a string then $x[i \ldots j]$ denotes the substring 
starting from the literal $x[i]$ ending at the literal $x[j]$. 
We consider an algorithm $\adv$ as efficient if it runs in 
probabilistic polynomial time (PPT). 
 
\subsection{Verifiable Delay Function}\label{VDF}
We borrow this formalization from~\cite{Dan2018VDF}.

\begin{definition}\normalfont{ \bf (Verifiable Delay Function).}
A VDF $\textsf{V} = (\textsf{Setup}, \textsf{Eval}, \textsf{Verify})$ that implements
a function $\mathcal{X}\rightarrow\mathcal{Y}$ is specified by three algorithms.
\begin{itemize}[label=\textbullet]
\item \textsf{Setup}$(1^\lambda, T) \rightarrow \mathbf{pp}$
 is a randomized algorithm that takes as input a security parameter $\lambda$ 
 and a targeted time bound $T$, and produces the public parameters 
 $\mathbf{pp}$. We require \textsf{Setup} to run in $\poly(\lambda,\log{T})$ time.
 
 \item \textsf{Eval}$(\mathbf{pp}, x) \rightarrow (y, \pi)$ takes an input 
 $x\in\mathcal{X}$, and produces an output $y\in\mathcal{Y}$ and a (possibly empty) 
 proof $\pi$. \textsf{Eval} may use random bits to generate the proof 
 $\pi$. For all $\mathbf{pp}$ generated 
 by $\textsf{Setup}(\lambda, T)$ and all $x\in\mathcal{X}$, the algorithm 
 \textsf{Eval}$(\mathbf{pp}, x)$ must run in time $T$.
 
 \item \textsf{Verify}$(\mathbf{pp}, x, y, \pi) \rightarrow \{0, 1\}$ is a 
 deterministic algorithm that takes an input $x\in\mathcal{X}$, an output $y\in\mathcal{Y}$,
 and a proof $\pi$ (if any), and either accepts ($1)$ or rejects ($0)$. 
 The algorithm must run in $\poly(\lambda,\log{T})$ time.
\end{itemize}
\end{definition}

Before we proceed to the security of $\vdf$s we need the precise model of parallel
adversaries \cite{Dan2018VDF}. 
\begin{definition}\normalfont{(\bf Parallel Adversary)}\label{paradv} 
A parallel adversary $\adv=(\adv_0,\adv_1)$ is a pair of non-uniform 
randomized algorithms $\adv_0$ with total running time $\poly(\lambda,T)$, 
and $\adv_1$ which runs in parallel time $\sigma(T)<T-o(T)$ on at 
most $\poly(\lambda,T)$ number of processors.
\end{definition}
Here, $\adv_0$ is a preprocessing algorithm that precomputes some
$\st$ based only on the public parameters, and $\adv_1$ exploits
this additional knowledge to solve in parallel running time $\sigma$ on 
$\poly(\lambda,T)$ processors.

The three desirable properties of a $\vdf$ are now introduced.

\begin{definition}\normalfont{(\bf Correctness)}\label{def: Correctness} 
A $\vdf$ is correct with some error probability $\varepsilon$,
if for all $\lambda, T$, parameters $\mathbf{pp}$, 
and $x\in\mathcal{X}$, we have
\[
\Pr\left[
\begin{array}{l}
\textsf{Verify}(\mathbf{pp},x,y,\pi)=1
\end{array}
\Biggm| \begin{array}{l}
\mathbf{pp}\leftarrow\textsf{Setup}(1^\lambda,T)\\
x\xleftarrow{\$} \mathcal{X}\\
(y,\pi)\leftarrow\textsf{Eval}(\mathbf{pp},x)
\end{array}
\right]
=1 - \negl(\lambda).
\]
\end{definition}

\begin{definition}\normalfont{\bf(Soundness)}\label{def: Soundness} 
A $\vdf$ is sound if for all non-uniform algorithms $\adv$ 
that run in time $\mbox{poly}(T,\lambda)$,
we have
\[
\Pr\left[
\begin{array}{l}
y\ne\textsf{Eval}(\mathbf{pp},x)\\
\textsf{Verify}(\mathbf{pp},x,y,\pi)=1
\end{array}
\Biggm| \begin{array}{l}
\mathbf{pp}\leftarrow\textsf{Setup}(1^\lambda,T)\\
(x,y,\pi)\leftarrow\mathcal{A}(1^\lambda,T,\mathbf{pp})
\end{array}
\right] \le \negl(\lambda).
\]
\end{definition}

We call the $\vdf$ \emph{perfectly} sound if this probability is $0$.

\begin{definition}\normalfont{\bf (Sequentiality)}\label{def: Sequentiality}
A $\vdf$ is $(\Delta,\sigma)$-sequential if there exists no
pair of randomized algorithms $\adv_0$ with total running time
$\mbox{poly}(T,\lambda)$ and $\adv_1$ which runs
in parallel time $\sigma$ on at most $\Delta$ processors, such that
\[
\Pr\left[
\begin{array}{l}
y=\textsf{Eval}(\mathbf{pp},x)
\end{array}
\Biggm| \begin{array}{l}
\mathbf{pp}\leftarrow\textsf{Setup}(1^\lambda,T)\\
\st\leftarrow\mathcal{A}_0(1^\lambda,T,\mathbf{pp})\\
x\xleftarrow{\$}\mathcal{X}\\
y\leftarrow\mathcal{A}_1(\st,x)
\end{array}
\right]
\le \negl(\lambda).
\]
\end{definition}

We reiterate an important result from~\cite{Dan2018VDF} but as a lemma.
\begin{lemma}{\normalfont ($T \in \mathsf{SUBEXP(\lambda)}).$}\label{subexp}
If $T > 2^{o(\lambda)}$ then there exists an adversary that breaks the sequentiality of
the $\vdf$ with non-negligible advantage.
\end{lemma}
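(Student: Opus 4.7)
The plan is to exhibit a preprocessing adversary $\adv=(\adv_0,\adv_1)$ that builds a full lookup table during preprocessing and subsequently answers any challenge in nearly zero parallel time. Without loss of generality I would take $\X=\{0,1\}^{\lambda}$ (or any input space of cardinality at most $2^\lambda$), which is the natural setting for a security parameter $\lambda$ and under which the sequentiality challenge samples $x$ uniformly from a space of size at most $2^\lambda$.

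First I would specify $\adv_0$: on input $(1^\lambda,T,\mathbf{pp})$ it enumerates every $x\in\X$ and, for each, honestly invokes $\textsf{Eval}(\mathbf{pp},x)$ sequentially in time $T$ to obtain $y_x$; its output state is the table $\st=\{(x,y_x)\}_{x\in\X}$. The hypothesis $T>2^{o(\lambda)}$, read contrapositively, guarantees $T\ge 2^{\lambda/k}$ for some constant $k>0$, i.e.\ $2^\lambda\le T^k$. Hence the total sequential work of $\adv_0$ is at most $2^\lambda\cdot T\le T^{k+1}=\poly(\lambda,T)$, so it fits within the preprocessing budget granted by Definition~\ref{paradv}.

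Next I would specify $\adv_1$: given $\st$ and a challenge $x\xleftarrow{\$}\X$, it performs a single table lookup. With $2^\lambda\le\poly(\lambda,T)$ processors available, this reduces to a binary tree of comparisons of parallel depth $O(\lambda)$. Because $T\ge 2^{\Omega(\lambda)}$ dominates every polynomial in $\lambda$, we have $O(\lambda)<T-o(T)$, which meets the parallel-time constraint $\sigma<T-o(T)$ demanded in Definition~\ref{paradv}. The value returned is exactly $\textsf{Eval}(\mathbf{pp},x)$ with probability $1$, i.e.\ overwhelming rather than merely non-negligible, so the sequentiality condition of Definition~\ref{def: Sequentiality} is violated.

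The main obstacle I anticipate is interpreting the informal expression ``$T>2^{o(\lambda)}$'' precisely enough to compare $2^\lambda$ against the allotted $\poly(T,\lambda)$ preprocessing budget; once the contrapositive form $T\ge 2^{\Omega(\lambda)}$ is adopted, the counting argument above is immediate, and nothing else in the proof requires delicate estimation.
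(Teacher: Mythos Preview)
Your argument is correct but follows a genuinely different route from the paper's. The paper's adversary does no precomputation at all: after receiving the challenge $x$, it fixes an arbitrary candidate $y$ and uses its $\poly(T)$ parallel processors to brute-force the proof space, running $\verify(x,\pp,T,y,\pi_i)$ on each processor; the resulting success probability is $\poly(T)/2^{\Omega(\lambda)}$, which becomes non-negligible exactly when $T$ exceeds $2^{o(\lambda)}$. You instead exploit the $\poly(\lambda,T)$ \emph{preprocessing} budget granted to $\adv_0$: once $T\ge 2^{\lambda/k}$ you have $2^\lambda\le T^k$, so $\adv_0$ can honestly run $\eval$ on every input and store a complete table, after which $\adv_1$ answers with probability~$1$. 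Your attack therefore gives a strictly stronger conclusion (certain success rather than merely non-negligible) and avoids the somewhat murky role of the proof $\pi$ in the paper's argument; conversely, the paper's attack does not rely on the hypothesis $|\X|\le 2^{\lambda}$ and would still apply even if the preprocessing phase $\adv_0$ were disallowed.
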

\begin{proof}
$\adv$ observes that the algorithm $\verify$ is efficient.
So given a statement $x \in \X$,  $\adv$ chooses an arbitrary $y \in \Y$ as the output
without running $\eval(x,\pp,T)$. Now, $\adv$ finds the proof $\pi$ by a brute-force
search in the entire solution space with its $\poly(T)$ number of processors.
In each of its processors, $\adv$ checks if $\verify(x,\pp,T,y,\pi_i)=1$ with different
$\pi_i$. The advantage of $\adv$ is $\poly(T)/2^{\Omega(\lambda)} \ge \negl(\lambda)$ as
$T > 2^{o(\lambda)}$. 
\qed
\end{proof}
So we need $T \le 2^{o(\lambda)}$ to restrict the advantage of $\adv$ 
upto $2^{o(\lambda)}/2^{\Omega(\lambda)}=2^{-\Omega(\lambda)}$.

\subsection{The Complexity Classes}

We start with the definition of Turing machine in order to discuss complexity classes.
We consider Turing machines with a read-only input tape and read-write work tape. 
\begin{definition}{\normalfont \textbf{(Turing machine).}}\label{TM}
A Turing Machine is a tuple $\mathsf{TM} = \langle Q, \Gamma, q_0, F, \delta \rangle$ with the following meaning,
\begin{enumerate}
 \item $Q$ is the finite and nonempty set of states.
 \item $\Gamma$ is the finite and non-empty set of tape alphabet symbols including the input alphabet $\Sigma$.
 \item $q_0\in Q$ is the initial state.
 \item $F \subseteq Q$ is the set of halting states.
 \item $\delta : \{Q\setminus F\} \times \Gamma \rightarrow Q \times \Gamma \times D$ is the transition functions 
 where $D=\{-1,0,+1\}$ is the set of directions along the tape.
\end{enumerate}
\end{definition}

Throughout the paper we assume that the initial state $q_0$, one of the final states $q_F$ and the tape alphabet 
$\Gamma=\{0,1, \vdash\}$ are implicit to the description of a $\TM$. Here $\Sigma=\{0,1\}$ and 
$\vdash$ marks the left-end of the tape. Thus $ \langle Q,F,\delta \rangle$ suffices to describe any $\TM$.
\begin{definition}{\normalfont \textbf{(Configuration).}}
 A configuration of a $\TM$ is a triple $(q,z,n)$ where, at present,
 \begin{enumerate}
  \item $q \in Q$ is the state of $\TM$.
  \item $z \in \Gamma^*$ is the content of the tape.
  \item $n \in \Z$ is the position of the head at the tape. 
 \end{enumerate}

\end{definition}

$(q_0,x,0)$ denotes the starting configuration for an input 
string $x$ instead of $(q_0,\vdash x,0)$ (w.l.o.g.).

\begin{definition}{\normalfont \textbf{($\tau$-th Configuration $\stackrel{\tau}{\rightarrow})$.}}
 The relation $\stackrel{\tau}{\rightarrow}$ is defined as,
 \begin{enumerate}
  \item $(q,z,n) \stackrel{\tau}{\rightarrow} (q', z' ,n+d)$ 
  where $z'=\ldots z[n-1]\| b  \|z[n+1] \ldots$ if $\delta(q,z[n])=(q',b,d)$.
  \item $\alpha \stackrel{\tau+1}{\rightarrow} \gamma$ if there exists a $\beta$ such that 
  $\alpha \stackrel{\tau}{\rightarrow} \beta \stackrel{1}{\rightarrow} \gamma$.
 \end{enumerate}
\end{definition}
We denote $\alpha \stackrel{0}{\rightarrow} \alpha$.
 and $\alpha \stackrel{*}{\rightarrow} \beta$ if $\alpha \stackrel{\tau}{\rightarrow} \beta$ for some $\tau \ge 0$.
We say that the $\TM$ halts on a string $x$ with the output $y$ if
$(q_0,x,0)\stackrel{*}{\rightarrow}(q_F,y,n)$ such that $q_F \in F$.

\begin{definition}{\normalfont \textbf{(Time and Space Complexity).}}
We say that a $\TM$ computes a function 
$f:\Sigma^* \rightarrow \Sigma^*$ in time $\tau$ and space $\sigma$
if $\forall x\in \Sigma^*$, $(q_0,x,0)\stackrel{\tau}{\rightarrow}(q_F,f(x),n)$
 using (at most) $\sigma$ different cells
on the working tape (excluding the input tape).
\end{definition}

We call a language $\L$ is decidable by a $\TM$ if and only if there exists a $\TM$ that accepts 
all the strings belong to $\L$ and rejects all the strings belong to $\overline{\L}=\Sigma^*\setminus \L$.
We say that a language $\L$ is reducible to another language $\L'$ if and only if there exists a function $f$ 
such that $f(x) \in \L'$ if and only if $x \in \L$. 
If the function $f$ is computable in $\poly(|x|)$-time then we call it 
as a polynomial time reduction $\L \le _p \L'$.

In order to discuss the complexity classes we follow the definitions provided in~\cite{Arora2009Modern}.

\begin{definition}{\normalfont \textbf{(DSPACE).}}
Suppose $f: \mathbb{N}\rightarrow\mathbb{N}$ be some function. A language $\L$
is in $\DSPACE[f(n)]$ if and only if there is a $\TM$ that decides $\L$ in space $\O(f(n))$.
\end{definition}


\begin{definition}{\normalfont \textbf{(The Class $\PSPACE$).}}
$$\PSPACE=\DSPACE[\poly(n)].$$
\end{definition}

%

\begin{definition}{\normalfont \textbf{($\PSPACE$-complete).}}
A language is $\PSPACE$-complete if it is in $\PSPACE$ and every language in $\PSPACE$
is reducible to it in polynomial time.
\end{definition}

\subsection{Interactive Proof System}
Goldwasser et al. were the first to show that the interactions between the prover and
randomized verifier recognizes class of languages larger than $\NP$~\cite{Goldwasser85Knowledge}. 
They named the class as $\IP$ and the model of
interactions as the interactive proof system. Babai and Moran introduced the same notion of interactions in
the name of Arthur-Merlin games however with a restriction on the verifiers' side \cite{Babai88AM}. 
Later, Goldwasser and Sipser proved that both the models are equivalent
\cite{Goldwasser86IPAM}. Two important works in this context that motivate our present
study are by the Shamir showing that $\IP=\PSPACE$ \cite{ShamirIP} and by the Goldwasser
et al. proving that $\PSPACE = \mathbf{ZK}$, the set of all zero-knowledge protocols.
We summarize the interactive proof system from \cite{Goldwasser88ZK}.

An interactive proof system $\ips$ consists of a pair of $\TM$s, $\prv$ and $\vrf$,
with common alphabet $\Sigma=\{0,1\}$. $\prv$ and $\vrf$ each have distinguished initial and
quiescent states. $\vrf$ has distinguished halting states out of which there is no
transitions. $\prv$ and $\vrf$ operates on various one-way infinite tapes,
\begin{enumerate}[label=\roman*.]
\item $\prv$ and $\vrf$ have a common read-only input tape.
\item $\prv$ and $\vrf$ each have a private random tape and a private work tape.
\item $\prv$ and $\vrf$ have a common communication tape.
\item $\vrf$ is polynomially time-bounded. This means $\vrf$ halts on input $x$ in time
$\poly(|x|)$. $\vrf$ is in quiescent state when $\prv$ is running.
\item $\prv$ is computationally unbounded but runs in finite time. This means $\prv$ 
may compute any arbitrary function $\{0,1\}^*\rightarrow \{0,1\}^*$ on input $x$ 
in time $f(|x|)$. Feldman proved that ``the optimum prover lives in $\PSPACE$"
\footnote{We could not find a valid citation.}.
\item The length of the messages written by $\prv$ into the common communication tape is
bounded by $\poly(|x|)$. Since $\vrf$ runs in $\poly(|x|)$ time, it can not write
messages longer than $\poly(|x|)$.
\end{enumerate}

Execution begins with $\prv$ in its quiescent state and $\vrf$ in its start state.
$\vrf$'s entering its quiescent state arouses $\prv$, causing it to transition to its
start state. Likewise, $\prv$'s entering its quiescent state causes $\vrf$ to
transition to its start state. Execution terminate when $\vrf$ enters in its halting
states. Thus $\ips(x)=1$ denotes $\vrf$ accepts $x$ and $\ips(x)=0$ denotes $\vrf$
rejects $x$.

\begin{definition}{\normalfont \textbf{(Interactive Proof System $\ips$ ).}}
$\ips$ is an interactive proof system for the language $\L\subseteq \{0,1\}^*$ if 

\begin{description}
 \item \noindent {\normalfont (Correctness).} $(x \in \L) \implies \Pr[\ips(x))=1] \ge 1-\negl(|x|)$.
 \item \noindent {\normalfont (Soundness).}  $(x\notin\L) \implies
\forall\prv',\Pr[(\prv'\leftrightarrow\vrf)(x))=1] < \negl(|x|)$.
\end{description}
\end{definition}

The class of interactive polynomial-time $\IP$ is defined as the class of the languages
that have an interactive proof system. Thus $$\IP=\{\L \mid \L \text{ has an }\ips\}.$$
Alternatively and more specifically,

\begin{definition}{\normalfont \textbf{(The Class IP).}}
$$\IP=\IP[\poly(n)].$$
For every $k$, $\IP[k]$ is the set of languages $\L$ such that there exist a probabilistic polynomial time $\TM$
$\vrf$ that can have a $k$-round interaction with a prover $\prv : \{0,1\}^*\rightarrow \{0,1\}^*$ 
having these two following properties
\begin{description}
 \item \noindent {\normalfont (Correctness).} $(x \in \L) \implies \Pr[\ips(x))=1] \ge 1-\negl(|x|)$.
 \item \noindent {\normalfont (Soundness).}  $(x\notin\L) \implies
\forall\prv',\Pr[(\prv'\leftrightarrow\vrf)(x))=1] < \negl(|x|)$.
\end{description}
\end{definition}

%

\section{Fiat--Shamir Transformation}
Any interactive protocol $\L\in\IP$ can be transformed into a non-interactive
protocol if the messages from the verifier $\vrf$ are replaced with the response of
a random oracle $\RO$. This is known as Fiat--Shamir transformation (FS)~\cite{FS86}. In
particular, the $i$-th message from $\vrf$ is computed as
$y_i:=\RO(x,x_1,y_1,\ldots,x_i, y_{i-1})$ where $x_i$ denotes the $i$-th
response of $\prv$. When $\RO$ is specified in the public parameters of a $k$-round
protocol, the transcript $x,x_1,y_1,\ldots,x_k, y_{k-1}$ can be verified publicly.
Thus, relative to a random oracle $\RO$, a $k$-round interactive proof protocol 
$(\prv \leftrightarrow \vrf)$ can be transformed into a two-round non-interactive argument 
$(\prv_{FS} \leftrightarrow \vrf_{FS})$ where $\prv_{FS}$ sends the entire transcript 
$x,x_1,y_1,\ldots,x_k, y_k$ to $\vrf_{FS}$ in a single round. Under the assumption that $\RO$ is
one-way and collision-resistant, $\vrf_{FS}$ accepts $x\in \L$ in the next round if and only if 
$\vrf$ accepts. Here we summarize two claims on Fiat--Shamir transformation stated in~\cite{Ephraim20VDF}.

\begin{lemma}\label{FS}
If there exists an adversary $\adv$ who breaks the soundness of the non-interactive
protocol $(\prv_{FS} \leftrightarrow \vrf_{FS})$ with the probability $p$ using
$q$ queries to a random oracle then there exists another adversary $\adv'$ who breaks
the soundness of the $k$-round interactive protocol $(\prv \leftrightarrow \vrf)$ 
with the probability $p/q^k$.
\end{lemma}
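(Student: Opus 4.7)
The plan is to prove this by constructing $\adv'$ as a black-box simulator that embeds the real interactive verifier $\vrf$ into one consistent branch of $\adv$'s oracle queries. Intuitively, $\adv$ succeeds because at some point during its $q$ random oracle calls it produces a transcript $x,x_1,y_1,\ldots,x_k,y_k$ that $\vrf_{FS}$ accepts, where each $y_i$ equals $\RO(x,x_1,y_1,\ldots,x_i)$. Out of $\adv$'s $q$ queries, exactly $k$ of them, in some order, are the queries that define the challenges $y_1,\ldots,y_k$ on the accepting transcript. If $\adv'$ could \emph{identify} those $k$ queries in advance, it could forward the corresponding prover-message prefixes $x_i$ to the real verifier $\vrf$ and use $\vrf$'s genuine replies as the oracle responses on those queries; the remaining $q-k$ queries would be answered by lazy sampling.

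Concretely, I would have $\adv'$ first sample an ordered tuple $(i_1,\ldots,i_k)\in[q]^k$ uniformly at random (a guess at which $k$ queries of $\adv$ form the accepting challenge chain, in order of protocol rounds). $\adv'$ then runs $\adv$ internally, maintaining a table for the simulated random oracle: whenever $\adv$ issues its $t$-th distinct query $\omega_t$, if $t\notin\{i_1,\ldots,i_k\}$, $\adv'$ returns a fresh uniform string of the appropriate length and records it; if $t=i_j$, $\adv'$ parses $\omega_t$ as a partial transcript $x,x_1,y_1,\ldots,x_j$, checks that its prefix $x,x_1,y_1,\ldots,x_{j-1}$ is consistent with the messages already fixed on rounds $1,\ldots,j-1$ (abort otherwise), sends $x_j$ to $\vrf$ in round $j$ of the interactive protocol, receives $\vrf$'s challenge $y_j$, and programs $\RO(\omega_t):=y_j$. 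When $\adv$ halts outputting a transcript, $\adv'$ outputs the final message $x_k$ to $\vrf$ and lets $\vrf$ decide.

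For the analysis, I would argue that conditioned on the event $\mathsf{Good}$ that $\adv'$'s guess $(i_1,\ldots,i_k)$ coincides with the indices (in round order) of the $k$ oracle queries that actually produce the accepting transcript, the view of $\adv$ is identically distributed to its view in the real Fiat--Shamir experiment: all answers outside the chain are uniform and independent, and the answers $y_1,\ldots,y_k$ supplied by $\vrf$ are uniform and independent of everything else because $\vrf$ draws them from its private random tape. So conditioned on $\mathsf{Good}$, $\adv$ produces an accepting transcript with probability at least $p$, and that same transcript is exactly what $\vrf$ receives round by round, so $\vrf$ accepts. Since the $k$ ``correct'' indices are a fixed (though unknown) $k$-tuple in $[q]^k$, $\Pr[\mathsf{Good}]\ge 1/q^k$, giving overall success probability at least $p/q^k$.

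The main obstacles I anticipate are bookkeeping rather than conceptual. First, one must handle the case where $\adv$ asks queries out of round order or never explicitly queries some prefix (if $\adv$ does not query $\RO$ on a prefix, the chance that the ``missing'' $y_j$ happens to match the final challenge is negligible and can be absorbed into $\negl(\lambda)$, or one can w.l.o.g. assume $\adv$ queries every prefix of its output transcript). Second, the consistency check at step $t=i_j$ is necessary so that the embedded $\vrf$ sees a syntactically valid round-$j$ message; on a correct guess this check always passes. With these small caveats in place, the counting argument $\Pr[\mathsf{Good}]\ge q^{-k}$ delivers the claimed bound $p/q^k$.
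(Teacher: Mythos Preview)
Your argument is the standard black-box reduction for Fiat--Shamir soundness and is correct; the paper itself does not spell out a proof at all but simply refers the reader to~\cite{Goldreich96}. Your guess-and-embed construction (sample $(i_1,\ldots,i_k)\in[q]^k$, program those oracle answers with the real verifier's challenges, lazily sample the rest) is exactly the argument that reference contains, including the w.l.o.g.\ assumption that $\adv$ queries every prefix of its output transcript and the observation that the verifier's public coins make the embedded answers distributionally identical to genuine random-oracle replies.
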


\begin{proof}
See~\cite{Goldreich96} for details. \qed
\end{proof}

\begin{lemma}\label{soundFS}
Against all non-uniform probabilistic polynomial-time adversaries,  
if a $k$-round interactive protocol $(\prv \leftrightarrow \vrf)$
achieves $\negl(|x|^k)$-soundness then the non-interactive protocol 
$(\prv_{FS} \leftrightarrow \vrf_{FS})$ has $\negl(|x|)$-soundness.
\end{lemma}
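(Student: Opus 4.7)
The plan is to obtain Lemma~\ref{soundFS} as a direct contrapositive of Lemma~\ref{FS}. First I would fix an arbitrary non-uniform probabilistic polynomial-time adversary $\adv$ against the non-interactive protocol $(\prv_{FS} \leftrightarrow \vrf_{FS})$ and denote by $p=p(|x|)$ the probability with which $\adv$ breaks its soundness. Because $\adv$ runs in polynomial time, the number of queries it issues to the random oracle $\RO$ is bounded by some $q=\poly(|x|)$.

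Second, I would feed $\adv$ into Lemma~\ref{FS} to extract an adversary $\adv'$ against the underlying $k$-round interactive protocol $(\prv \leftrightarrow \vrf)$ whose success probability is at least $p/q^{k}$. The hypothesis that $(\prv \leftrightarrow \vrf)$ enjoys $\negl(|x|^{k})$-soundness then forces $p/q^{k} \le \negl(|x|^{k})$, and rearranging yields
\[
p \;\le\; q^{k}\cdot\negl(|x|^{k}).
\]

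The last step is to argue that this upper bound is itself $\negl(|x|)$. Since $k$ is a fixed constant and $q=\poly(|x|)$, the factor $q^{k}$ remains polynomial in $|x|$; and any function that decays faster than every polynomial in $|x|^{k}$ automatically decays faster than every polynomial in $|x|$. Hence $q^{k}\cdot\negl(|x|^{k})$ is still negligible in $|x|$, giving $p \le \negl(|x|)$. Since $\adv$ was arbitrary, this is exactly $\negl(|x|)$-soundness for $(\prv_{FS} \leftrightarrow \vrf_{FS})$. I do not anticipate any real obstacle beyond the asymptotic bookkeeping in this final paragraph: the one point I would record carefully is that the class of functions negligible in $|x|^{k}$ sits inside the class of functions negligible in $|x|$, and that multiplication by a polynomial factor $q^{k}$ preserves the latter, because the lemma is stated using the slightly non-standard notation $\negl(|x|^{k})$.
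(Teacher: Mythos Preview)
Your proposal is correct and follows essentially the same approach as the paper: bound $q$ by $\poly(|x|)$ using the PPT assumption and then invoke Lemma~\ref{FS}. The paper's own proof is in fact terser than yours---it simply notes $q=|x|^{c}$ and says the claim follows---so your careful handling of the asymptotic bookkeeping (that $\negl(|x|^{k})\subseteq\negl(|x|)$ and that multiplying by the polynomial factor $q^{k}$ preserves negligibility) only makes the argument more complete.
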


\begin{proof}
Since, all the adversaries run in probabilistic polynomial time, the number of queries
$q$ to the random oracle must be upper-bounded by $\poly(|x|)$.
Putting $q =|x|^c $ for any $c \in \Z^+$ in lemma.~\ref{FS}, it follows the claim. 
\qed
\end{proof}

\section{$\vdf$ Characterization}
 
In this section, we investigate the possibility to model $\vdf$s as a language in order
to define its hardness. It seems that there are two hurdles,
\begin{description}
\item [Eliminating Fiat--Shamir]
The prover $\prv$ in Def.~\ref{VDF}, generates the proof $\pi:=f(x,y,T,\RO(x,y,T))$ using Fiat--Shamir
transformation where $y:=\eval(x,\pp,T)$. Unless Fiat--Shamir is eliminated from $\vdf$, its
hardness remains relative to the random oracle $\RO$. Sect.~\ref{ivdf} resolves this
issue.

\item  [Modelling Parallel Adversary] How to model the parallel adversary $\adv$ (Def. \ref{paradv}) 
in terms of computational complexity theory? We model $\adv$ as a special variant of
Turing machines described in Def.~\ref{PTM}.
\end{description}

We address the first issue now.

\subsection{Interactive $\vdf$s}\label{ivdf}

We introduce the interactive $\vdf$s in order to eliminate the Fiat--Shamir.
In the interactive version of a $\vdf$, the $\vrf$ replaces the
randomness of Fiat--Shamir heuristic. 
In particular, a non-interactive $\vdf$ with the Fiat--Shamir transcript 
$\langle x,x_1,y_1,\ldots,x_k, y_k \rangle$ can be translated into an equivalent $k$-round 
interactive $\vdf$ allowing $\vrf$ to choose $y_i$s in each round.

%

\begin{definition}\normalfont{ \bf (Interactive Verifiable Delay Function).} 
An interactive verifiable delay function is a tuple $(\setup, \eval,\mathsf{Open} ,\verify)$ 
that implements a function $\X\rightarrow\Y$ as follows,
\begin{itemize}[label=\textbullet]
\item $\setup(1^\lambda, T) \rightarrow \pp$
 is a randomized algorithm that takes as input a security parameter $\lambda$ 
 and a delay parameter $T$, and produces the public parameters 
 $\pp$ in $\poly(\lambda,\log{T})$ time.
 
 \item $\eval(\pp, x) \rightarrow y$ takes an input 
 $x\in\mathcal{X}$, and produces an output $y\in\mathcal{Y}$. 
 For all $\mathbf{pp}$ generated by $\textsf{Setup}(\lambda, T)$ 
 and all $x\in\mathcal{X}$, the algorithm 
 \textsf{Eval}$(\mathbf{pp}, x)$ must run in time $T$.

 \item $\mathsf{Open}(x,y,\pp,T,t)\rightarrow\pi$ 
 takes the challenge $t$ chosen by $\vrf$ and computes a proof  
 $\pi$ (possibly recursively) in $\poly(\lambda)$ rounds of 
 interaction with $\vrf$. In general, for some $k \in \poly(\lambda)$, 
 $\pi=\{\pi_1, \ldots, \pi_k\}$ can be computed as 
 $\pi_{i+1}:=\mathsf{Open}(x_i,y_i,\pp,T,t_i)$ where 
 $x_i$ and $y_i$ depend on $\pi_i$. Observing $(x_i,y_i,\pi_i)$ 
 in the $i$-th round, $\vrf$ chooses the challenge $t_i$ for 
 the $(i+1)$-th round. Hence, $\mathsf{Open}$ runs for $k$-rounds.
  
 \item \textsf{Verify}$(\mathbf{pp}, x, y, \pi) \rightarrow \{0, 1\}$ is a 
 deterministic algorithm that takes an input $x\in\mathcal{X}$, an output $y\in\mathcal{Y}$,
 and the proof vector $\pi$ (if any), and either accepts ($1)$ or rejects ($0)$. 
 The algorithm must run in $\poly(\lambda,\log{T})$ time.
\end{itemize}
\end{definition}

All the three security properties remain same for the interactive $\vdf$. 
Sequentiality is preserved by the fact that
$\mathsf{Open}$ runs after the computation of $y:=\eval(x,\pp,T)$. For soundness, we rely
on lemma.~\ref{soundFS}. The correctness of interactive $\vdf$s implies the correctness
of the non-interactive version as the randomness that determines the proof is not in the
control of $\prv$. Therefore, an honest prover always convinces $\vrf$. 

%
%
%

Although the interactive $\vdf$s do not make much sense as publicly verifiable proofs in
decentralized distributed networks, it allow us to analyze its hardness irrespective of
any random oracle.  

In order to model parallel adversary, we consider a well-known variant of Turing machine that suits the
context of parallelism. We describe the variant namely parallel Turing machine as
briefly as possible from (Sect. 2 in cf.\cite{Worsch1993Parallel})%
 \subsection{Parallel Turing Machine}
 Intuitively, a parallel Turing machine has multiple control units (\textsf{CU})
 (working collaboratively) with a single head associated with each of them
 working on a common read-only input tape \cite{Worsch1993Parallel}. 
 and a common read-write work tape.
 \begin{definition}{\normalfont \textbf{(Parallel Turing Machine).}}\label{PTM}
  a parallel Turing machine is a tuple 
  $\PTM = \langle Q, \Gamma,\Sigma, q_0, F, \delta \rangle$ where
\begin{enumerate}
 \item $Q$ is the finite and nonempty set of states.
 \item $\Gamma$ is the finite and non-empty set of tape alphabet symbols including the input alphabet $\Sigma$.
 \item $q_0\in Q$ is the initial state.
 \item $F \subseteq Q$ is the set of halting states.
 \item $\delta : 2^Q \times \Gamma\rightarrow 2^{Q\times D} \times \Gamma$ 
 where $D=\{-1,0,+1\}$ is the set of directions along the tape.
\end{enumerate}
 \end{definition}
 
 A configuration of a $\PTM$ is a pair $c=(p,b)$ of mappings
 $p:\Z^+\rightarrow 2^Q$ and $b:\Z^+\rightarrow \Gamma$.
 The mapping $p(i)$ denotes the set of states of the \textsf{CU}s
 currently pointing to the $i$-th cell in the input tape and $b(i)$ is the 
 symbol written on it. So it is impossible for two different \textsf{CU}s 
 pointing to the same cell $i$ while staying at the same state simultaneously.
 During transitions $c'=(M'_i,b'(i))=\delta(c)=\delta(p(i),b(i))$, the set of \textsf{CU}s
 may be replaced by a new set of \textsf{CU}s $M'_i \subseteq Q \times D$. 
 The $p'(i)$ in the configuration $c'$ is defined as 
 $p'(i)=\{q \mid (q,+1)\in M'_{i-1} \lor (q,0)\in M'_{i} \lor (q,-1)\in M'_{i+1}\}$. 
 
 Without loss of generality, the cell $1$ is observed in order to find 
 the halting condition of $\PTM$. We say that a $\PTM$ halts on a 
 string if and only if $p(1)\subseteq F$ after some finite time. 
 The notion of decidability by a $\PTM$ 
 is exactly same as in $\TM$. We denote $\PTM(s,t,h)$
 as the family of all languages for which there is a $\PTM$ recognizing them 
 using space $s$, time $t$ and $h$ processors. Thus languages decidable by a
 $\TM$ is basically decidable by a $\PTM(s,t,1)$. Assuming $\TM(s,t)$ is the set of
 languages recognized by a $\TM$ in space $s$ and time $t$, we mention Theorem 15 from
 (cf. \cite{Worsch1993Parallel}) without the proof.


We observe that the parallel adversary $\adv$ defined in Def. \ref{paradv} is
essentially a $\PTM$ having $\poly(\lambda,T)$ processors running on $\poly(\lambda,T)$
space in time $\sigma(T)$. We will refer such a $\PTM$ with $\poly(\lambda,T)$-$\PTM$
(w.l.o.g.) in our subsequent discussions.

\subsection{$\vdf$ As A Language} 

Now  we characterize $\vdf$s in terms of computational complexity theory. We
observe that, much like $\ips$, $\vdf$s are also proof system for the languages,
\[
\L=\left\{(x,y,T)
\begin{array}{l}
\end{array}
\Biggm| \begin{array}{l}
\mathbf{pp}\leftarrow\setup(1^\lambda,T)\\
x \in  \{0,1\}^\lambda\\
y\leftarrow\eval(\pp,x)
\end{array}
\right\}.
\]
$\prv$ tries to convince $\vrf$ that the tuple $(x,y,T)\in \L$ in
polynomially many rounds of interactions. 
In fact, Pietrzak represents his $\vdf$ using such a language (Sect. 4.2 in cf.
\cite{Pietrzak2019Simple}) where it needs $\log T$ (i.e., $\poly(\lambda))$
rounds of interaction. However, by design, the $\vdf$ is non-interactive. It uses
Fiat--Shamir transformation.

Thus, a $\vdf$ closely resembles an $\ips$ except on the fact
that it stands sequential (see Def.~\ref{def: Sequentiality}) even against 
an adversary (including $\prv$) possessing subexponential parallelism. 
Notice that a $\poly(\lambda,T)$-$\PTM$ (see Def. \ref{PTM})
precisely models the parallel adversary described in Def. \ref{paradv}. 
In case of interactive proof systems, we never talk about the running time of $\prv$
except its finiteness. On the contrary, $\prv$ of a $\vdf$ must run for at least
$T$ time in order to satisfy its sequentiality. Hence, we define $\vdf$ as follows,

\begin{definition}{\normalfont \textbf{(Verifiable Delay Function $\vdfs$).}}
For every $\lambda\in \Z^+$, $T \in 2^{o(\lambda)}$ and for all $s= (x,y,T) \in
\{0,1\}^{2\lambda +\lceil\log T\rceil}$,
$\vdfs$ is a verifiable delay function for a language $\L\subseteq\{0,1\}^*$ if
\begin{description}
 \item \noindent {\normalfont (Correctness).} $(s \in \L) \implies \Pr[\vdfs(s))=1]\ge 1-\negl(\lambda)$.
 \item \noindent {\normalfont (Soundness).} $(s \notin\L)\implies\forall\adv,\Pr[\advs(s))=1] \le \negl(\lambda)$.
 \item \noindent {\normalfont (Sequentiality).} $(s\in\L)\implies\forall\adf,\Pr[\adfs(s))=1] \le \negl(\lambda)$.
\end{description}
where, 
\begin{enumerate}[label=\roman*.]
\item $\prv : \{0,1\}^*\rightarrow \{0,1\}^*$ is a $\TM$ that runs in time $\ge T$, 
\item $\adv : \{0,1\}^* \rightarrow \{0,1\}^*$ is a $\TM$ that runs in 
time $\poly(\lambda,T)$,
\item $\adf$ is a {\normalfont $\poly(\lambda,T)$}-$\PTM$ 
(see Def. \ref{PTM}) that runs in time $<T$.
\end{enumerate}
\end{definition}


Further we define the class of all verifiable delay functions as, 
\begin{definition}{\normalfont \textbf{(The Class $\VDF)$.}}\label{VDF}
$$\VDF=\VDF[\poly(\lambda)].$$
For every $k\in\Z^+$, $\VDF[k]$ is the set of languages $\L$ 
such that there exists a probabilistic polynomial-time 
$\TM$ $\vrf$ that can have a {\normalfont $k$}-round interaction with
\begin{enumerate}[label=\roman*.]
\item $\prv : \{0,1\}^* \rightarrow \{0,1\}^*$ is a $\TM$ that runs in time $\ge T$, 
\item $\adv : \{0,1\}^* \rightarrow \{0,1\}^*$ is a $\TM$ that runs in time $\poly(\lambda,T)$,
\item $\adf$ is a {\normalfont $\poly(\lambda,T)$}-$\PTM$ 
(see Def. \ref{PTM}) that runs in time $<T$.
\end{enumerate}
satisfying these three following properties,

\begin{description}
 \item \noindent {\normalfont (Correctness).} $(s \in \L) \implies \Pr[\vdfs(s))=1]\ge 1-\negl(\lambda)$.
 \item \noindent {\normalfont (Soundness).} $(s \notin\L)\implies\forall\adv,\Pr[\advs(s))=1] \le \negl(\lambda)$.
 \item \noindent {\normalfont (Sequentiality).} $(s\in\L)\implies\forall\adf,\Pr[\adfs(s))=1] \le \negl(\lambda)$.
\end{description}
\end{definition}

\section{$\PSPACE$-hardness}

Although the definitions of $\IP$ and $\VDF$ appear quite similar, 
the key differences are,
\begin{enumerate}
\item $\IP$ does not demand for sequentiality. In fact, one of the most elegant $\IP$,
the sumcheck protocol for $\mathsf{UNSAT}$ is known to be parallelizable in nature. The
sumcheck protocol asks $\prv$ to compute the sum $\sum_{z\in \{0,1\}^n}f(z)=y$ of
polynomial $f$ of small degree. Therefore, $y$ can be computed in $2^{\O(n)}$-time
sequentially but in $\O(2^n/\Gamma)$-time parallelly when $\prv'$ has $\Gamma$ number of
parallel processors. As $\IP$ allows a malicious prover $\prv'$ to have unbounded
computational power (so processors), sequentiality can not be achieved against $\prv'$. 

\item $\IP$ demands for statistical soundness i.e., no prover
has non-negligible advantage to convince $\vrf$ with a false proof. 
On the contrary, $\VDF$ asks for computational soundness only
i.e., no prover running in $\poly(\lambda,T)$-time has non-negligible
advantage. As we see, statistical soundness implies computational soundness. 
\end{enumerate}
 
These two observations together imply that $\vdf$s are special kind of interactive
proofs that are sequential but with a bit relaxed notion of soundness.  

Therefore, the proof for $\VDF \subseteq \IP = \PSPACE$ is straightforward. 

\begin{theorem}
 $\VDF \subseteq  \PSPACE$.
\end{theorem}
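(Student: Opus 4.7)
The plan is to show $\VDF \subseteq \IP$ and then invoke Shamir's theorem $\IP = \PSPACE$ \cite{ShamirIP}. Given $\L \in \VDF$ with protocol $(\setup,\eval,\mathsf{Open},\verify)$, I would view the natural interactive protocol as an $\ips$: the verifier $\vrf$ runs $\setup$ on $(1^\lambda,T)$ to obtain the public parameters $\pp$ in polynomial time (in $\lambda$ and $\log T$), conducts the $\poly(\lambda)$-round interaction prescribed by $\mathsf{Open}$, and accepts iff $\verify$ does. The prover $\prv$ in each round plays the message an honest VDF prover would, constructed from $\eval$ and $\mathsf{Open}$. Since $\IP$ imposes no bound on the prover's running time, the sequentiality constraint ``prover runs in time $\ge T$'' that $\vdf$ imposes is vacuous in the $\IP$ setting and poses no obstacle to the embedding.

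Completeness carries over verbatim: the $\vdf$ correctness probability $1-\negl(\lambda)$ coincides with what $\IP$ demands on yes-instances. For soundness, I would invoke Lemma \ref{subexp}, which restricts us to the regime $T\le 2^{o(\lambda)}$, so that $\vdf$'s $\poly(\lambda,T)$-time computational soundness is still at worst subexponential. Since Shamir's proof of $\IP\subseteq\PSPACE$ proceeds by recursively optimising the acceptance probability over prover strategies expressible in polynomial space, only $\PSPACE$-bounded cheating strategies are ever effectively considered in the argument; computational soundness against such strategies is a direct consequence of the $\vdf$ soundness definition.

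The main obstacle is precisely this soundness-gap bridge, since $\IP$ in its cleanest formulation demands statistical soundness against unbounded provers, whereas the $\vdf$ only guarantees computational soundness. I expect to handle it in two steps: first, note that by Feldman's theorem the optimal prover for any interactive protocol already lives in $\PSPACE$, and in the admissible range $T\le 2^{o(\lambda)}$ every such prover is effectively covered by the $\poly(\lambda,T)$-time quantifier of $\vdf$ soundness; second, observe that Shamir's $\PSPACE$ simulator only needs to rule out such prover strategies in order to correctly decide the language. Once this bridge is made precise, $\L\in\IP$ in the sense sufficient for Shamir's containment direction, and combining this with $\IP=\PSPACE$ yields $\L\in\PSPACE$, completing the proof.
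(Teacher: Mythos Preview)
Your approach via $\IP$ matches the paper's proof \emph{sketch} in the introduction, but the paper's actual proof of the theorem takes a more direct route that bypasses $\IP$ entirely. The paper simply observes that the honest prover's $\eval$ algorithm, which runs in time $T$, already decides the language $\L$: given $(x,y,T)$, run $\eval(\pp,x)$ and compare to $y$. Since Lemma~\ref{subexp} caps $T$ at $2^{o(\lambda)}$, the paper then invokes Lemma~\ref{halttime} to argue that $o(\lambda)$ space suffices, placing $\L$ in $\PSPACE$ directly. This sidesteps any need to reconcile computational and statistical soundness.

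Your proposed bridge for that reconciliation has a genuine gap. You argue that Feldman's theorem puts the optimal cheating prover in $\PSPACE$, and that ``in the admissible range $T\le 2^{o(\lambda)}$ every such prover is effectively covered by the $\poly(\lambda,T)$-time quantifier of $\vdf$ soundness.'' But a $\PSPACE$ prover on inputs of length $|s|\approx\lambda$ may run for time $2^{\poly(\lambda)}$, whereas $\poly(\lambda,T)\le\poly(\lambda,2^{o(\lambda)})=2^{o(\lambda)}$. So the $\vdf$'s computational soundness against $\poly(\lambda,T)$-time adversaries says nothing about a $2^{\poly(\lambda)}$-time $\PSPACE$ prover, and the embedding into $\IP$ (which requires soundness against unbounded, or at least $\PSPACE$, provers) is not justified by the argument you give. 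The paper's direct route avoids this difficulty because it never needs soundness at all: membership in $\L$ is witnessed by the deterministic $\eval$ computation itself.
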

\begin{proof}
An honest prover $\prv$ needs to run for time $T$ 
to decide if a tuple $(x,y,T)\in\L$ for all $\L\in \VDF$.
By lemma.~\ref{subexp}, $T$ can be at most $2^{o(\lambda)}$. 
By lemma.~\ref{halttime}, a $\TM$ with $S$-space may run for $|Q|S2^S$-time.
Therefore, a $\TM$ with even $o(\lambda)$-space suffices to decide $\L$. 
Hence, $\VDF \subseteq \PSPACE$.
\qed
\end{proof}

The standard way to prove $\PSPACE \subseteq \VDF$ is to derive a $\vdf$
from a $\PSPACE$-complete language~\cite{ShamirIP}. Existence 
of such a $\vdf$ would imply that there is an inherently sequential 
$\PSPACE$-complete problem whose solution is sound also.
We claim that such a $\PSPACE$-complete problem that commits 
sequentiality and soundness together, hardly exists. We present two flawed $\vdf$s
in order to show this.

\subsection{A Sound But Non-sequential Approach}
Our goal is to check if a statistically sound interactive proof evokes a $\vdf$.
Thus, following~\cite{ShamirIP}, we attempt to derive a $\vdf$ from $\TQBF$. 

\begin{definition}{\normalfont (\textsc{True-Quantified-Boolean-Formula} $\TQBF$).}
Let, $\Psi=Q_1x_1,\ldots Q_nx_n \phi(x_1, \ldots, x_n)$ be a quantified Boolean formula
of $n$ variables and $m$ clauses such that all $Q_i \in\{\exists, \forall\}$ and $\phi$ is in 3-CNF
(w.l.o.g.). The language $\TQBF$ is defined as the set of all quantified Boolean 
formula that are true. Formally,

$$ \TQBF = \{\Psi(x_1, \ldots, x_n)=1 \}.$$
\end{definition}

\subsubsection{Sumcheck Protocol for $\TQBF$}

We summarize the sumcheck protocol for $\TQBF$ from~\cite{ShamirIP}.

Given a quantified Boolean formula (QBF) $\Psi$, first we arithmetize $\Psi$ to obtain
a polynomial $f$ as follows, 
\begin{enumerate}
\item $\forall x_n \phi(x_1,\ldots, x_n)$ evaluate $\prod_{x_n\in \{0,1\}} f(x_1,\ldots, x_n)$.
\item $\exists x_n \phi(x_1,\ldots, x_n)$ evaluate $\sum_{x_n\in \{0,1\}} f(x_1,\ldots, x_n)$. 
\end{enumerate}
Thus, a QBF $\Psi=\forall x_1 \exists x_2 \ldots \forall x_n \phi(x_1,\ldots, x_n) \in \TQBF$ 
if and only if 
$h(x_1,\ldots, x_n)=\prod_{x_1\in \{0,1\}} \sum_{x_2\in \{0,1\}} \ldots \prod_{x_n\in \{0,1\}} f(x_1,\ldots, x_n) \ne 0$. 
As $\phi$ is in 3-CNF, degree of $f$ 
is $\O(n^3)$.But, due to the presence of $\prod$ operator in $h$, its degree and number
of coefficients can be $\O(2^n)$ in the worst-case.

It is resolved with the observation that $x^k=x$ for all $k \ge 1$ as $x\in\{0,1\}$. It
allows to define a linearization operator $L_n$ as follows,
$$ L_n f(x_1,\ldots, x_n)=x_n\cdot f(x_1,\ldots, x_{n-1},1)+(1-x_n)\cdot f(x_1,\ldots, x_{n-1},0).$$ 
Thus, in order to keep the degree and size of $h$ in $\poly(n)$, we sprinkle the
linearization operators in between $h$ as,
$$h'=\prod_{x_1\in \{0,1\}}L_1 \sum_{x_2\in \{0,1\}} L_1 L_2
\prod_{x_3\in \{0,1\}} \ldots \prod_{x_n\in \{0,1\}}L_1 L_2 \ldots L_n  f(x_1,\ldots, x_n).$$
The size of $h'$ is $\O(n^2)$ as there are exactly $n(n+3)/2$ operators. 

The sumcheck protocol for $\TQBF$ asks the prover $\prv$ to prove that 
$h'(x_1,\ldots, x_n)=y \ne 0 \mod{p}$ for a prime $p \ge 2^n 3^m$ in at most
$n(n+3)/2$-rounds. In each round, $\prv$ strips one operator.
Let us denote the operator before $x_i$ in $h'$ with $\otimes_i \in
\{\prod_i,\sum_i, L_i\}$. The protocol is defined recursively as follows,

Suppose the partial sum of $h'(r_1, r_2, \ldots, r_{i-1}, x_i,\ldots x_n)=y'$
where $r_j$ is chosen from the finite field $\F_p$ uniformly at random. For all $i$,
\begin{description}
\item [Case 1:] If $\otimes_i = \sum_{x_i}$ then $\prv$ sends a univariate polynomial
$s(x_i)=h'(r_1, r_2, \ldots, r_{i-1}, x_i,\ldots, x_n)$. The verifier $\vrf$ rejects if
$s(0)+s(1) \ne y'$, otherwise asks $\prv$ to prove in the next round that 
$s(r_i)=h'(r_1, r_2, \ldots, r_{i-1}, r_i,x_{i+1},\ldots, x_n)$ for some $r_i \in_R \F_p$.

\item [Case 2:] If $\otimes_i = \prod_{x_i}$ Exactly same as case (1) except that $\vrf$ 
rejects if $s(0)\cdot s(1) \ne y'$ instead of $s(0)+s(1)\ne y'$.

\item [Case 3:] If $\otimes_i = L_{x_i}$ then $\prv$ sends a univariate polynomial
$s(x_i)=h'(r_1, r_2, \ldots, r_{i-1}, x_i,\ldots, x_n)$. The verifier $\vrf$ rejects if
$r_i\cdot s(0)+(1-r_i)\cdot s(1) \ne y'$, otherwise asks $\prv$ to prove in the next round 
that $s(r_i)=h'(r_1, r_2, \ldots, r_{i-1}, r_i,x_{i+1},\ldots, x_n)$ for some 
$r_i \in_R \F_p$.
\end{description}

\begin{lemma}
 For all adversaries $\adv$ even with computationally unbounded power the probability the $\vrf$
 accepts a false $y \ne h'(x_1,\ldots, x_n)$ is at most $\frac{3mn+n^2}{p}$.
\end{lemma}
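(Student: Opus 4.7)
The plan is to argue soundness by induction on the number of rounds remaining, exactly as in the standard sumcheck analysis, and to aggregate a per-round bound obtained from Schwartz-Zippel applied to a single variable. Concretely, I would set up the following invariant: after $i-1$ rounds in which $\vrf$ has not yet rejected, the surviving claim is of the form ``$T_i(x_i) \equiv y_{i-1}$ under the outermost remaining operator $\otimes_i$'', where $T_i(x_i) := h'(r_1,\ldots,r_{i-1},x_i,x_{i+1},\ldots,x_n)$ is the true univariate polynomial in $x_i$ obtained by fixing the earlier randomness and formally leaving $x_i$ free. The prover is caught cheating as soon as this claim becomes \emph{false}; my goal is to bound the probability that a false claim persists undetected through all $n(n+3)/2$ rounds.

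Next, I would establish the per-round dichotomy. Suppose the current claim is false, i.e.\ $\otimes_i T_i \neq y_{i-1}$. If the prover sends the honest polynomial $s_i = T_i$, then by construction the local consistency check of round $i$ (one of $s(0)+s(1) = y_{i-1}$, $s(0)\cdot s(1) = y_{i-1}$, or $r_{i-1}s(1)+(1-r_{i-1})s(0) = y_{i-1}$) would force $\otimes_i T_i = y_{i-1}$, contradiction; so $\vrf$ rejects in this case. Hence to survive round $i$ the prover must send some $s_i \neq T_i$ as polynomials in $\F_p[x_i]$. Then the round $(i{+}1)$ claim is that $T_i(r_i) = s_i(r_i)$, and since $T_i - s_i$ is a nonzero polynomial of degree at most $d_i := \max(\deg T_i,\deg s_i)$, the univariate Schwartz-Zippel bound gives that this new false claim holds with probability at most $d_i/p$ over the uniform choice of $r_i \in \F_p$.

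Then I would bound $\sum_i d_i$ by separating the two types of rounds. For the $n$ quantifier rounds (one per variable), the polynomial $T_i$ in $x_i$ has degree at most the total degree of $f$ in $x_i$, which is at most $3m$ since $\phi$ is a $3$-CNF with $m$ clauses and $f = \prod_{j=1}^m C_j$ with each $C_j$ of total degree $3$; this contributes at most $3mn/p$. For each of the $n(n+1)/2$ linearization rounds $L_i$, I would argue that the degree of $T_i$ in $x_i$ is at most $2$: reading $h'$ from the inside out, each $L_i$ appears immediately after a block that has either just linearized $x_i$ or just applied a single $\prod$ or $\sum$, which doubles the degree in $x_i$ at most once; hence each linearization round contributes at most $2/p$, totalling $n(n+1)/p$.

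A union bound across all $n(n+3)/2$ rounds then yields a total soundness error of at most $(3mn + n(n+1))/p \le (3mn + n^2)/p$ after absorbing the linear-in-$n$ slack (or, if one prefers, by tightening the linearization analysis so that the very first $L_i$ in each block operates on a polynomial that is already linear, dropping the contribution of $n$ rounds). The main obstacle I anticipate is making the degree accounting in the linearization rounds fully rigorous, since the operator positions $L_1 L_2 \cdots L_i$ nested between the quantifiers must be traversed carefully to confirm that the degree in $x_i$ never exceeds $2$ at the moment the $L_i$ round is reached; once that is established, the rest is a clean union bound.
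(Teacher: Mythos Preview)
Your overall plan---per-round Schwartz--Zippel plus a union bound over the $n(n+3)/2$ rounds---is exactly what the paper does. The gap is in the degree accounting, where you have the roles of the two families of rounds reversed. In the linearized protocol, every quantifier operator $\prod_{x_i}$ or $\sum_{x_i}$ is immediately followed (reading inward) by $L_1\cdots L_i$, so when you strip the quantifier the polynomial the prover sends is already linear in $x_i$; those $n$ rounds therefore contribute $1/p$ each, not $3m/p$. Conversely, your claim that \emph{every} linearization round has degree at most $2$ fails for the $n$ innermost operators $L_1L_2\cdots L_n$ applied directly to $f$: once you strip $L_j$ there, what remains is $L_{j+1}\cdots L_n f$, whose degree in $x_j$ is the degree of $f$ in $x_j$, i.e.\ up to $3m$. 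The paper's count is thus: $n$ quantifier rounds at $1/p$, the $n$ innermost $L$-rounds at $3m/p$, and the remaining $\sum_{i=1}^{n-1} i=\tfrac{n(n-1)}{2}$ linearization rounds at $2/p$, giving
\[
\frac{n}{p}+\frac{3mn}{p}+\frac{2}{p}\cdot\frac{n(n-1)}{2}=\frac{3mn+n^2}{p}.
\]

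Your two mis-attributions happen to nearly cancel numerically (you over-charge $n$ rounds by $3m-1$ and under-charge $n$ rounds by $3m-2$), which is why you land at $(3mn+n^2+n)/p$; but the derivation does not justify that number, and your proposed tightening---that the first $L_i$ in each block sees a polynomial already linear in $x_i$---is also not correct, since an intervening $\prod$ can double the degree in $x_i$ back to $2$ before that $L_i$ is reached. Fixing the accounting as above removes the spurious $n/p$ and gives the stated bound directly.
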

 
\begin{proof}
For case (1) and (2), the degree of the polynomial $s(x_i)$ is $1$ as $L_{x_i}$s
linearize $s$. For case (3), the degree of $s(x_i)$ can be at most $2$.

By Schwartz-Zippel lemma, every two distinct univariate polynomials of degree 
$\le d$ over a field $\F$ agree in at most $d$ points. So, the probability that $\vrf$
accepts a wrong $y$ has this two components,
\begin{enumerate}[label=\roman*.]
\item For the inner $L_{x_i}$, this probability is $\le \frac{2}{p}$.  
\item For the final $L_{x_i}$, this probability is $\le \frac{3m}{p}$.  
\end{enumerate}

Therefore, by the union bound, the total probability that $\vrf$ accepts a wrong $y$
is at most,

$$ \frac{n}{p} + \frac{3mn}{p} + \frac{2}{p}\sum_{i=1}^{n-1} i= \frac{3mn+n^2}{p}.$$ 
\qed
\end{proof}

\subsubsection{Argument Against Sequentiality}
Although $\TQBF$ raises an interactive proof with statistical soundness, we are not sure
if it admits a subexpoentially long sequential computation too. The reason is that the
maximum number of sequential steps required to evaluate $h'$ in the worst-case is the
number of operators in it i.e., $n(n+3)/2$. So, in order to setup a $\vdf$ for
sequential time $T$, it needs to sample a QBF of length $\Omega(\sqrt{T})$ as the public
parameter. To support efficient execution in practice, cryptographic protocols should allow
polynomially long public parameters only. Therefore, such a $\vdf$ may work for 
$T \in\poly(\lambda)$ but not for $T \in 2^{o(\lambda)}$.

\subsection{A Sequential But Unsound Approach}
In this section, we present another wrong attempt to derive a $\vdf$ which turns out to
be sequential but not computationally sound.

\begin{definition}{\normalfont \textbf{($\HALT)$.}}
Suppose $M$ is a Turing machine, $x\in\Sigma^*$ is an input string and $S\in\mathbb{N}$.  
The language $\HALT$\footnote{See $\mathsf{SPACE\; TMSAT}$ (cf. Def. 4.9)
in \cite{Arora2009Modern}.} is the
set of all the tuples $(\langle M \rangle, x, 1^S)$  
such that the $\TM$ $M$ halts on input $x$ in space $S$. Formally, 
$$\HALT = \{ (\langle Q,F,\delta \rangle ,x,1^S) \mid
(q_0,x,0)\xrightarrow{*}(q_F,y,n)\text{ in space } S\}.$$
\end{definition}

\begin{lemma}\label{haltcomp}
 $\HALT$ is $\PSPACE$-complete.
\end{lemma}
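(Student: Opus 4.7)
The plan is to establish the two ingredients of $\PSPACE$-completeness separately: first that $\HALT\in\PSPACE$, and then $\PSPACE$-hardness by exhibiting a polynomial-time reduction from an arbitrary $\L\in\PSPACE$ to $\HALT$.

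For the containment, I would describe a universal $\TM$ $U$ that on input $(\langle M\rangle,x,1^S)$ simulates $M$ directly. Because $S$ appears in unary, reserving $S$ cells on $U$'s work tape to hold $M$'s tape contents costs only linear space in the input length; an additional $O(|\langle M\rangle|)$ cells store $\delta$ and the current state, while $O(\log|Q|+\log S)$ cells track the head position. The only non-trivial ingredient is loop detection: by lemma.~\ref{halttime} an $S$-space $\TM$ either reaches a halting state within $|Q|\cdot S\cdot 2^S$ steps or never halts, so $U$ can maintain a $\poly(S)$-bit binary counter and reject once the bound is exceeded. The whole simulation then fits in space polynomial in $|\langle M\rangle|+|x|+S$, yielding $\HALT\in\PSPACE$.

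For the hardness, fix any $\L\in\PSPACE$; by definition there is a constant $c$ and a $\TM$ $M_\L$ that decides $\L$ using space $|x|^c$, entering the distinguished halting state $q_F$ exactly when $x\in\L$. Define the reduction $f(x)=(\langle M_\L\rangle,x,1^{|x|^c})$. This is computable in $\poly(|x|)$ time since $\langle M_\L\rangle$ is a fixed constant-size string and $1^{|x|^c}$ is trivially constructible in polynomial time. Correctness is immediate: $x\in\L$ iff $(q_0,x,0)\xrightarrow{*}(q_F,y,n)$ within space $|x|^c$, which is precisely the membership condition $f(x)\in\HALT$.

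The main potential obstacle is conceptual rather than technical, namely the interplay between the generic notion of ``halting'' and the specific requirement of reaching the distinguished state $q_F$ built into the definition of $\HALT$. The convention fixed in the preliminaries, that every $\TM$ description implicitly carries one distinguished halting state $q_F$, makes this coincide with acceptance, so the hardness reduction needs no preprocessing of $M_\L$ to force infinite loops on rejection. Apart from that, both directions are textbook arguments, and no part of the plan hinges on bounds sharper than the space-time estimate already recorded as lemma.~\ref{halttime}.
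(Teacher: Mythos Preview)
Your proposal is correct and the hardness argument is essentially identical to the paper's: both exhibit the polynomial-time reduction $f(x)=(\langle M_\L\rangle,x,1^{\poly(|x|)})$ from an arbitrary $\L\in\PSPACE$. The paper's proof is terser in that it only spells out the hardness direction and leaves membership $\HALT\in\PSPACE$ implicit, whereas you additionally supply the universal-simulation-with-counter argument and flag the halting-state convention; these extras are sound but not needed to match the paper.
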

\begin{proof}
We show that any language $\L \in \PSPACE$ is reducible to $\HALT$ in polynomial-time.
Suppose $\L \in \PSPACE = \DSPACE (\poly(n))$ is decided by a $\TM$ $M$.
Then the function $f(x)=(\langle M  \rangle, x, 1^{\poly(\mid x \mid )})$
is a polynomial-time reduction from $\L$ to $\HALT$. \qed
\end{proof}

\begin{lemma}\label{haltseq}
If time-lock puzzle is inherently sequential then $\HALT$ is inherently sequential.
\end{lemma}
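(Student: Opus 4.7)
The plan is to argue by contraposition: I will assume $\HALT$ admits a fast parallel algorithm and use the $\PSPACE$-completeness machinery to transport that parallelism to the time-lock puzzle, contradicting the hypothesis. Since time-lock puzzle evaluation, i.e.\ computing $x^{2^T} \bmod N$, can be carried out using only $\poly(\lambda, \log T)$ work-tape cells (repeated squaring stores a bounded number of residues modulo $N$), time-lock puzzle is a problem in $\PSPACE$. Hence by Lemma~\ref{haltcomp}, there is a polynomial-time many-one reduction $f$ from time-lock puzzle to $\HALT$: every instance $(x, N, T)$ is mapped to some tuple $(\langle M \rangle, x', 1^S)$ with $S = \poly(\lambda, \log T)$ in $\poly(\lambda, \log T)$ sequential time.

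Next, I would suppose for contradiction that $\HALT$ is not inherently sequential. That is, there exist a preprocessing algorithm $\adf_0$ running in $\poly(\lambda, T)$ time and a $\poly(\lambda, T)$-$\PTM$ $\adf_1$ that together decide $(\langle M\rangle, x, 1^S) \in \HALT$ in parallel time strictly less than $T - o(T)$ with non-negligible advantage. Composing the polynomial-time reduction $f$ with $\adf_0, \adf_1$ yields a new pair $(\adv_0, \adv_1)$: $\adv_0$ simulates $f$ to produce the $\HALT$-instance (this is sequential, but only $\poly(\lambda, \log T) \ll T$ time), and $\adv_1$ then runs $\adf_1$ on that instance. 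The total parallel time is still strictly less than $T - o(T)$, so $(\adv_0, \adv_1)$ breaks the sequentiality of the time-lock puzzle, contradicting the hypothesis. Therefore $\HALT$ must be inherently sequential.

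The step I expect to be delicate is ensuring that the reduction $f$ preserves the sequentiality parameter in a meaningful way, i.e.\ that a time-lock puzzle instance with sequential hardness $T$ is mapped to a $\HALT$ instance whose natural sequential hardness is also $\Omega(T)$. This is where the choice of $M$ in the reduction matters: $M$ should be (essentially) the $\PSPACE$ machine that performs the repeated squaring, so that the smallest halting time of $M$ on $x'$ is lower-bounded by the TLP sequentiality of $T$. Under that choice, the reduction runs in $\poly(\lambda, \log T)$ time (negligible compared to $T$) and the resulting $\HALT$-instance inherits the sequential lower bound, making the contradiction in the previous paragraph go through cleanly.
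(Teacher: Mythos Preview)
Your proposal is correct and follows essentially the same route as the paper: argue by contraposition, observe that the time-lock puzzle lies in $\PSPACE$, invoke the $\PSPACE$-completeness of $\HALT$ (Lemma~\ref{haltcomp}) to reduce, and conclude that a parallel decider for $\HALT$ would parallelize the time-lock puzzle. The paper's proof is terser and leaves the adversary model and the parameter-preservation issue implicit; your final paragraph about choosing $M$ to be the repeated-squaring machine so that its halting time is $\Omega(T)$ is exactly the detail the paper glosses over, so you are simply being more careful about the same argument.
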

\begin{proof}
We prove this by contradiction.
Suppose $(\langle M \rangle, x, T)\in \HALT$ and we parallelize the simulation of $M$ 
with another $\TM$ $\widehat{M}$. Then any $\TM$ that decides a language $\L \in \PSPACE$ must be parallelizable
using the $\TM$ $\widehat{M}$. Then it means that there exists no inherently sequential
language in $\PSPACE$. But we already know the existence of languages
(e.g., time-lock puzzle \cite{Rivest1996Time}) which is
sequential but can be evaluated in polynomial space.
 \qed
\end{proof}

\begin{lemma}\label{halttime}
 $\HALT$ is decidable in at most $|Q|S2^S$ time.
\end{lemma}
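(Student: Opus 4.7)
The plan is to bound the number of distinct configurations reachable by $M$ when restricted to space $S$, and then invoke the pigeonhole principle to turn that bound into a halting time bound.

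First, I would count configurations. A configuration of $M$ (restricted to $S$ cells of work tape) is a triple $(q,z,n)$ consisting of a current state $q \in Q$, tape contents $z$ filling the $S$ usable cells, and a head position $n \in \{1,\ldots,S\}$. The number of state choices is $|Q|$, the number of head positions is $S$, and since the effective tape content per cell is binary (the marker $\vdash$ is fixed at the leftmost cell by the convention set in the preliminaries), the number of possible tape contents is at most $2^S$. Multiplying gives at most $|Q| \cdot S \cdot 2^S$ distinct configurations.

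Next I would apply pigeonhole. Since $M$ is deterministic, its next configuration is uniquely determined by its current configuration. Hence, if $M$ ever revisits a configuration it has already been in, it is trapped in a cycle and will never halt. Consequently, if $M$ halts on input $x$ within space $S$ at all, it must halt within at most $|Q| \cdot S \cdot 2^S$ steps: otherwise the sequence of configurations $c_0, c_1, \ldots, c_{|Q|S2^S}$ would contain more entries than the number of available distinct configurations, forcing a repetition.

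Finally, I would turn this into a decision procedure for $\HALT$: on input $(\langle M \rangle, x, 1^S)$, simulate $M$ step by step, counting steps, for at most $|Q| \cdot S \cdot 2^S$ steps; accept if $M$ reaches a state in $F$ within that bound, and reject otherwise. Correctness follows immediately from the previous paragraph, and the running time is $|Q| \cdot S \cdot 2^S$ up to polynomial overhead from simulation bookkeeping. The argument is essentially elementary; the only subtlety worth flagging is the accounting for the tape alphabet, which justifies the factor $2^S$ rather than $|\Gamma|^S$, and this is precisely handled by the convention that $\vdash$ is fixed to the leftmost cell so each of the $S$ working cells effectively carries a binary symbol. \qed
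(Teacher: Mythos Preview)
Your proposal is correct and follows essentially the same approach as the paper: count configurations as $|Q|\cdot S\cdot 2^S$ and apply the pigeonhole principle to bound the halting time, yielding the obvious simulation-based decider. If anything, your write-up is more complete, since you explicitly justify the $2^S$ factor via the fixed $\vdash$ convention and spell out the decision procedure, whereas the paper states only the configuration count inside the proof and relegates the pigeonhole remark to the surrounding text.
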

\begin{proof}
There are $2^S$ different strings that could appear in the work tape of $M$.
The head could be in any of $S$ different places and the $M$ could be
in one of $|Q|$ different states. So the total number of configurations is
$|Q|S2^S$. 
\qed
\end{proof}
By the pigeonhole principle, if $M$ is run for further steps, it must visit
a configuration again resulting into looping.
We design a $\vdf$ from the language $\HALT$.

\subsubsection{$\vdf$ From $\HALT$}\label{haltvdf}
The design of this $\vdf$ is based on two 
fundamental observations on Turing machines. 
\begin{itemize}
\item the running time $T$ of $M$ having $S$ space is bounded by $|Q|S2^S$ 
(by lemma.~\ref{halttime}).
\item $M$ continues to stay within the set of halting states 
(either accepting or rejecting) once it reaches at one of them.
\end{itemize}

We specify the algorithms for $\vdf$s as follows,
\begin{description}
\item [$\setup(1^\lambda, T)\rightarrow \pp$]
 It samples a Turing machine, $M=(Q,F,\delta)$ such that
 $2^{\Omega(\lambda)} \le  |Q| \le 2^{\poly(\lambda)}$ and
 $|F| \le \poly(\lambda)$. Although the description of $M$ 
is exponentially large, it suffices to provide $\delta$ as a $\poly(\lambda)$-size
circuit that outputs the next state $q_{i+1}$ on an input $q_i$. Without loss of
generality, we assume an encoding $Q=\{0,1\}^{\lceil \log |Q| \rceil}$ with 
the implicit initial state $q_0=0^{\lceil \log |Q| \rceil}$.

\item[$\eval(x, \pp)\rightarrow (y,\pi)$]
The prover $\prv$ computes the $T$-th state $q_T$ of $M$ on the input $x$ starting from
the initial state $q_0$ and sends it to the verifier $\vrf$. Formally,
$q_T:=\delta^T(q_0,x)$. Now, $\vrf$ asks $\prv$ to provide another state $q_t$ from this
sequence of states $q_0, \delta(q_0), \ldots, \delta^T(q_0)$ such that $(T-t) \le
\lambda$. With $q_t$, $\vrf$ also asks for the tape content $z$ of $M$ at time $t$.

\begin{algorithm}
\caption{$\eval$ from $\HALT$}\label{alg:eval}
\begin{algorithmic}[1]
\STATE  $(q_0,x,0)\stackrel{T}{\rightarrow}(q_T,y,n)$. 
\STATE Obtains $t \ge T -\lambda$ from $\vrf$. 
\STATE $(q_0,x,0)\stackrel{t}{\rightarrow}(q_{t},y_{t},n_{t})$.
\STATE Initialize a string $z:=y_{t}[n_{t}]$.
\FOR  {$t \le  i \le T$}
\STATE  $(q_{i},y_{i},n_{i})\stackrel{1}{\rightarrow}(q_{i+1},y_{i+1},n_{i+1})$.
\STATE  $z:=z \| y_{i+1}[n_{i+1}]$.
\ENDFOR
\STATE $y:=q_T$
\STATE $\pi:(q_t,z)$
\RETURN $(x,y,T,\pi)$
\end{algorithmic}
\end{algorithm}

\item[$\verify(x, \pp, \phi, \pi)\rightarrow \{0,1\}$]
Using $\verify$, $\vrf$ checks if $y=q_T$ in $(T-t) \le \lambda$ steps, as follows,

\begin{algorithm}
\caption{$\verify$ for $\HALT$}\label{alg:verify}
\begin{algorithmic}[1]
 \STATE $(q_{t},z,0)\stackrel{T-t}{\rightarrow}(q_T,y,n)$.
 \IF {$y=q_T$}
	\RETURN 1;
 \ELSE 
	\RETURN 0; 
 \ENDIF
\end{algorithmic}
\end{algorithm}
\end{description}

\subsection{Efficiency}
Here we discuss the time and the memory required by the prover and the verifier.
\begin{description}
 \item [Proof Size] The output $\phi=q_T$ needs $\log |Q|$-bits where 
 $\lambda \le \log |Q| \le \poly(\lambda)$. The proof $\pi=(q_{T-t},z)$ 
 requires $\log |Q|+t$-bits as $|z|=t$.
 
 \item [$\prv$'s Effort] $\prv$ needs $T$ time to find $q_T$, then $T-t$ time to find $q_{T-t}$ and 
 finally $t$ time to find $z$. By the Theorem~\ref{thm : seq} it requires $2T$ time in total.
 
 \item [$\vrf$'s Effort] $\vrf$ needs only $t$ time to find $q_T$ from $q_{T-t}$ using $z$. 
Deciding $q \in F$ is already shown to be efficient.
\end{description}

\section{Security}
We claim that the derived $\vdf$ is correct and sequential but not sequential.

\begin{theorem}
 The constructed $\vdf$ is correct.
\end{theorem}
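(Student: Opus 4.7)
The plan is to show that an honest execution of $\eval$ produces a transcript that the verifier can reconstruct deterministically, so the verification succeeds with probability one (in particular, with error $0$, which is stronger than the $\negl(\lambda)$ requirement of Def.~\ref{def: Correctness}).

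First I would unpack what an honest prover outputs. On input $x$, the prover runs the sampled $\TM$ $M$ for $T$ steps starting from $(q_0,x,0)$, obtaining the configuration sequence $(q_0,y_0,n_0)\stackrel{1}{\to}(q_1,y_1,n_1)\stackrel{1}{\to}\cdots\stackrel{1}{\to}(q_T,y_T,n_T)$, and sets $y:=q_T$. After receiving the challenge $t\ge T-\lambda$, it exhibits the intermediate state $q_t$ and builds $z$ as the concatenation $z=y_t[n_t]\,\|\,y_{t+1}[n_{t+1}]\,\|\,\cdots\,\|\,y_T[n_T]$, i.e.\ exactly the sequence of head-symbols encountered by $M$ during steps $t,t+1,\ldots,T$.

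Next I would argue that this data lets $\verify$ reproduce $q_T$ step by step. Each transition $(q_i,y_i,n_i)\stackrel{1}{\to}(q_{i+1},y_{i+1},n_{i+1})$ is determined by $\delta(q_i,y_i[n_i])$, and so the next state $q_{i+1}$ depends only on $q_i$ and on the single symbol $y_i[n_i]$, which is precisely the $(i-t)$-th character of $z$. Therefore the simulation $(q_t,z,0)\stackrel{T-t}{\to}(q_T',y',n')$ carried out by $\verify$ generates the same state sequence $q_t,q_{t+1},\ldots,q_T$ as the prover's original run, hence $q_T'=q_T=y$. I would formalize this by a straightforward induction on $i\in\{t,\ldots,T\}$, with inductive hypothesis ``after $i-t$ simulated transitions, $\verify$ is in state $q_i$ and has consumed the first $i-t+1$ symbols of $z$''. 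The base case $i=t$ holds by construction of $z$ and $q_t$, and the inductive step uses the determinism of $\delta$ together with the fact that the $(i-t+1)$-th symbol of $z$ equals $y_i[n_i]$.

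Finally I would conclude: because $\verify$'s comparison $y \stackrel{?}{=} q_T$ always succeeds on an honest transcript, we obtain $\Pr[\verify(\pp,x,y,\pi)=1]=1\ge 1-\negl(\lambda)$, as required. The proof is entirely deterministic, so no probabilistic analysis is needed; I do not anticipate a genuine obstacle, but the one point that must be handled carefully is interpreting the verifier's simulation $(q_t,z,0)\stackrel{T-t}{\to}(\cdots)$ as reading the $(i-t)$-th symbol of $z$ at the $i$-th simulated step (rather than treating $z$ as a spatial tape), since otherwise the head positions $n_i$ visited by the prover's run would not match those of the verifier's simulation and the argument would fail.
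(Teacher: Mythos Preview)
Your proposal is correct and follows essentially the same approach as the paper: both argue that an honest transcript deterministically passes $\verify$, so correctness holds with probability~$1$. The paper's own proof is a two-sentence assertion without the inductive detail you supply; your closing caveat about reading $z$ as a stream of head-symbols rather than a spatial tape in fact flags an ambiguity the paper leaves implicit.
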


\begin{proof}
 If $\prv$ has run $\eval$ honestly then $(q_0,x,0)\stackrel{T}{\rightarrow}(q_T,y,n)$.
 The integer $t$ is solely determined by the input statement $x$ and the output $q_T$. 
 Thus $\vrf$ will always find $y=q_T$ by computing $(q_{t},z,0)\stackrel{T-t}{\rightarrow}(q_T,y,n)$. \qed
\end{proof}

\begin{theorem}\label{thm : seq}
 If there is an adversary $\adv$ breaking the sequentiality of this $\vdf$ with the probability $p$ then 
 there is a Turing machine $\adv'$ breaking the sequentiality of the language $\HALT$ with the same probability.
\end{theorem}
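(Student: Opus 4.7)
The plan is to give a direct black-box reduction: use $\adv$ to compute the $T$-th state of the Turing machine appearing in a $\HALT$ instance, and then answer $\HALT$ by checking whether that state lies in $F$. In particular, I would build $\adv'$ from $\adv$ so that every execution of $\adv$ that correctly outputs $q_T$ yields a correct $\HALT$ decision by $\adv'$, giving exactly the same success probability $p$.

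First, I would observe that the $\vdf$ of Section~\ref{haltvdf} operates on precisely the data specifying a $\HALT$ instance: a description of a Turing machine $M=(Q,F,\delta)$, an input $x$, and a time bound $T$. The $\eval$ algorithm computes $q_T:=\delta^T(q_0,x)$ in $T$ sequential steps, and breaking sequentiality means $\adv$ produces this $q_T$ correctly with probability $p$ using a $\poly(\lambda,T)$-$\PTM$ whose parallel running time is some $\sigma<T$.

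Next, I would define $\adv'$ on input $(\langle M\rangle,x,1^T)$ as follows. First, $\adv'$ assembles the $\vdf$ public parameters $\pp$ from $\langle M\rangle$ exactly as in $\setup$, which costs only $\poly(\lambda,\log T)$ time and is trivially parallelizable. Then $\adv'$ invokes $\adv(x,\pp)$ on its own parallel hardware to obtain a candidate state $q^{*}$ in parallel time $\sigma<T$. Finally, $\adv'$ outputs ``halts'' iff $q^{*}\in F$; since $|F|\le\poly(\lambda)$, this membership check contributes only $\poly(\lambda)$ extra parallel time, so the total parallel runtime of $\adv'$ is still strictly less than $T$.

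For correctness of the reduction I would invoke the second of the two design observations in Section~\ref{haltvdf}: once $M$ enters a halting state it remains in $F$ forever. Consequently, if $q^{*}=\delta^T(q_0,x)$ then $q^{*}\in F$ iff $M$ halts on $x$ within $T$ steps, i.e., iff $(\langle M\rangle,x,1^T)\in\HALT$. Hence on every run where $\adv$ succeeds (probability $p$), $\adv'$ decides $\HALT$ correctly, yielding the claimed success probability. The main obstacle I anticipate is bookkeeping on the overhead: making sure the $\poly(\lambda)$ cost of reconstructing $\pp$ and of the final $q^{*}\in F$ test stays strictly below $T$. This goes through because the $\vdf$ regime enforces $T\le 2^{o(\lambda)}$ but $T$ is still taken large compared to any fixed polynomial in $\lambda$ appearing in $\setup$ and in the $F$-lookup, so $\sigma+\poly(\lambda,\log T)<T$ comfortably, and the reduction is tight in the sense demanded by the statement.
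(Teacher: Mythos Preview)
Your proposal is correct and follows essentially the same black-box reduction as the paper: feed the $\HALT$ instance to the sequentiality adversary $\adv$, read off the candidate $q_T$, and decide membership by testing $q_T\in F$, using $|F|\le\poly(\lambda)$ to keep the overhead negligible. The only minor slip is that the $\HALT$ instance is $(\langle M\rangle,x,1^S)$ with a space bound $S$ rather than $1^T$; otherwise your argument mirrors the paper's, with your added bookkeeping on the $\setup$ and $F$-lookup overhead being a welcome clarification.
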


\begin{proof}
 Suppose a $\TM$ halts on input $x$ in space $S$ and in time $T \le |Q|S2^S$.
 Thus to decide if the string $ ( M,x,1^S ) \in \HALT$ in time $<T$,
 $\adv'$ passes it to $\adv$ that runs in time $<T$. 
 When $\adv$ outputs $(q_T,\pi)$, $\adv'$ checks if
 $q_T \in F$ or not and decides the string $(M,x,1^S)$.
 Observe that deciding $q_T \in F$ is efficient as $|F| \in \poly(\lambda)$.
 
 As $\adv$ breaks the sequentiality of this $\vdf$ with the probability $p$ so $\Pr[q_T \in F]=p$. 
 Hence $\adv'$ decides $\HALT$ in time $<T$ with the probability $p$ violating the sequentiality of $\HALT$.\qed
\end{proof}

While everything seems perfect in the above construction of the $\vdf$, there exists an
adversary that breaks the soundness of this $\vdf$ \emph{certainly}. 

\begin{theorem}\label{thm : smart}
There exists an adversary $\adv$ who breaks the soundness of the $\vdf$ derived from
$\HALT$.
\end{theorem}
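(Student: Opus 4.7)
The plan is to exploit the crucial omission in $\verify$: it only re-executes the last $T-t$ transitions of $M$ starting from the prover-supplied configuration $(q_t, z)$, and never certifies that $(q_t, z, 0)$ is actually reachable from the initial configuration $(q_0, x, 0)$ via $t$ steps of $M$. Since $T - t \leq \lambda$, this gap lets a malicious prover freely forge the ``suffix'' of the computation, paying only $\O(\lambda)$ work for what should have required $\Omega(T)$ sequential steps.

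Concretely, on input $(1^\lambda, T, \pp)$, the adversary $\adv$ proceeds as follows. First, $\adv$ fixes any $x \in \{0,1\}^\lambda$ and runs $\eval(x, \pp)$ honestly (in time at most $\poly(\lambda, T)$) to learn the true output $q_T = \eval(\pp, x)$. Next, $\adv$ picks an arbitrary intermediate state $q_t^* \in Q$ (say, uniformly at random) and an arbitrary tape string $z^*$, and simulates $M$ for the $T - t \leq \lambda$ steps from $(q_t^*, z^*, 0)$ to obtain some final state $q_T^*$. If $q_T^* = q_T$, $\adv$ resamples $(q_t^*, z^*)$. Finally $\adv$ outputs the tuple $(x,\, y^* := q_T^*,\, \pi^* := (q_t^*, z^*))$. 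The total running time is $T + \O(\lambda) \in \poly(\lambda, T)$, which matches the computational budget of Definition~\ref{def: Soundness}.

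Verification of this forgery is automatic: $\verify$ re-simulates exactly the same $T - t$ transitions of $M$ from $(q_t^*, z^*, 0)$ that $\adv$ just performed, arrives at the same state $q_T^*$, and accepts because the claimed output $y^* = q_T^*$ matches. Since by construction $y^* \neq q_T = \eval(\pp, x)$, the pair $(y^*, \pi^*)$ is exactly a false accept, so $\adv$ succeeds with probability $1 - \negl(\lambda)$.

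The only subtle step is justifying that the resampling in step (e) terminates quickly, i.e., that most choices of $(q_t^*, z^*)$ yield $q_T^* \neq q_T$. This is where the size guarantee $|Q| \geq 2^{\Omega(\lambda)}$ enforced by $\setup$ becomes essential: because $\delta^{T-t}(\cdot, z^*)$ is a deterministic map whose codomain is $Q$, at most a $|Q|^{-1} = 2^{-\Omega(\lambda)}$ fraction of candidate configurations can collide with the single true target state $q_T$, and so $\O(1)$ fresh samples of $q_t^*$ suffice to escape the collision set with probability $1 - \negl(\lambda)$. This counting argument is the main (and only) technical obstacle, and it is handled directly by the large state-space bound built into $\setup$.
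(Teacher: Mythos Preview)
Your diagnosis is the paper's: $\verify$ only replays the last $T-t\le\lambda$ transitions from a prover--supplied configuration and never checks reachability from $(q_0,x,0)$. But the concrete attack you describe has a real gap and one incorrect step.

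\textbf{Order of moves.} In this protocol the offset $t$ is fixed only \emph{after} the prover commits to $y$; the paper states explicitly that ``the integer $t$ is solely determined by the input statement $x$ and the output $q_T$'' (Fiat--Shamir), and in the interactive version $\vrf$ sends $t$ after receiving $q_T$. Your adversary, however, first ``simulates $M$ for the $T-t$ steps from $(q_t^*,z^*,0)$'' and then sets $y^*:=q_T^*$; that is, you compute $y^*$ as a function of $t$, which is not yet available, so the construction is circular. The paper's adversary sidesteps this: it picks an arbitrary configuration $\alpha$, runs $M$ for the full $\lambda$ steps to some $\beta$ while recording the whole trace $(\mathcal{Q},\mathcal{Z})$, and announces the state of $\beta$ as $y$. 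When $t$ is subsequently revealed it simply reads off $q_t$ and $z$ from the stored trace. This answers \emph{every} admissible $t\in[T-\lambda,T]$ simultaneously and never runs $\eval$ at all, so the total cost is $\O(\lambda)$ rather than your $\poly(T)$.

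\textbf{The collision bound.} From ``$\delta^{T-t}(\cdot,z^*)$ is a deterministic map whose codomain is $Q$'' you conclude that at most a $|Q|^{-1}$ fraction of inputs hit the true $q_T$. That inference is false: a deterministic map may be constant, in which case every choice of $q_t^*$ collides and your resampling loop never terminates. The large state space $|Q|\ge 2^{\Omega(\lambda)}$ says nothing about the preimage structure of $\delta^{T-t}$. (The paper, for its part, simply writes ``arbitrary configuration'' and does not argue $y\neq\eval(\pp,x)$ either, so this point is glossed over there too; but your explicit counting argument is wrong as stated.)
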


\begin{proof}
The key observation made by $\adv$ is that the verifier $\vrf$ never checks if the state
$q_{t}$ in proof $\pi$ has correctly been computed starting from the state $q_0$ in
$t$-steps. However, $\adv$ knows that the maximum distance from the state $q_{t}$ and
$q_T$ is upper-bounded by $\lambda$.

Therefore, $\adv$ chooses an arbitrary configuration $\alpha$ of $M$. (S)he never runs
$\eval$. Rather, by simulating $M$, (s)he computes another configuration $\alpha \xrightarrow{\lambda}
\beta$ remembering the sequence of all the states and the scanned input symbols in these $\lambda$ steps. 
This can be done efficiently as $Q$ can be encoded using only $\poly(\lambda)$-bits. 
Suppose, the sequence of states and the scanned symbols are denoted as $\mathcal{Q}$ and $\mathcal{Z}$.
Observe that $|\mathcal{Q}|=|\mathcal{Z}|=\lambda$. 
Finally, $\adv$ announces the state in the configuration $\beta$ as the
output $q_T$

When $\adv$ obtains the offset $t$ from $\vrf$, 
$\adv$ fixes the $q_{t}:=\mathcal{Q}[t-\lambda-1]$ and
$z:=\mathcal{Z}[t-\lambda-1, \ldots, \lambda]$. Essentially, $\adv$ sets the
$(t-\lambda)$-th state in $\mathcal{Q}$ as the state $q_{t}$ in the proof $\pi$. 
Similarly, the tape content $z$ in $\pi$ is the sequence of last $(T-t)$ symbols in 
$\mathcal{Z}$. 

Clearly, $\vrf$ will be convinced as $(q_{t},z,0)\xrightarrow{T-t}(q_T,z',n)$.
 
\qed
\end{proof}
%
%
%

\subsubsection{Argument Against Soundness} Theorem.~\ref{thm : smart} suggests that the
$\vdf$ is not computationally sound, however, is sequential for any $T \in 2^{o(\lambda)}$.
This argument can be supported with an informal claim that if such a sequence of states
can be verified efficiently with soundness then a chain of hashes (i.e., $\RO^T(x)$) may
also raise $\vdf$s. Unfortunately, we do know no efficient verifiaction algorithm
for hash-chains.

\section{Frequently Asked Questions}
In this section, we would like to clear some obvious confusions in our results. 

\begin{itemize}

\item [Q.1)] Does Lemma.~\ref{haltseq} suggest that every problem in $\PSPACE$
is inherently sequential?

\item [A.1)] Yes and no both. Because our interpretation of this theorem is that we call
a problem inherently sequential as long as we do not know an efficiently parallelizable
for the same. For example, the time-lock puzzle or the isogenies over the super-singular 
curves. Similarly, having a parallel algorithm for a problem does not really deny the
chance of having another sequential algorithm for the same. Because a parallelizable
language in $\PSPACE$ also reduces to $\HALT$. So, we believe that every language in
$\PSPACE$ has sequential algorithm and parallelizable algorithm together.

\item [Q.2)] Does the answer A.1 suggest that problems in $\P$ can also be used to
derive $\vdf$?

\item [A.2)] Yes. Because, the key observation is that the definition of $\vdf$ imposes an upper bound on
$T \le 2^{o(\lambda)}$ but not the lower bound. For arbitrarily small $T$ , the computation can not
be arbitrarily hard. For all $T \le \poly(\lambda)$, the hardness can not be beyond $\P$.  In
principle, we need sequentially hard problems rather than the computationally hard
problems to derive VDFs.

For example, circuit value problem (CVP) is known to be a P-complete problem. 
The algorithm that solves CVP is efficient but is sequential also as we do not know faster parallel algorithm. 
So, $\vdf$s derived from $\P$-complete problems should work for all $T \in \poly(\lambda)$.

\item [Q.3)] The $\vdf$ derived in Sect.~\ref{haltvdf} uses a random oracle $\RO$ in
$\eval$ and $\verify$. Does it mean that all the subsequent claims are relative to
a random oracle?

\item [A.3)]
The definitions of the class $\VDF$ [Def. 19] and the interactive $\vdf$s [Def. 18] are
independent of any random oracle. As the class VDF is defined to be the set of
interactive VDFs, this result is not relative to any random oracle. It could have been
only if the class $\VDF$ was defined to be the set of non-interactive VDFs.

The random oracle $\RO$ is required only in the non-interactive version of this $\vdf$ 
as per the Fiat–Shamir heuristic, but not in the interactive VDF. We have explicitly
mentioned how the verifier replaces the random oracle $\RO$ in the interactive version  
in the descriptions of each algorithm $\setup$, $\eval$ and $\verify$.
Still the interactive version satisfies all the security proofs in Sect.~\ref{ivdf} which are
also independent of any random oracle.

\item [Q.5)] Do the flawed attempts to derive $\vdf$s suggest 
that no problems in $\PSPACE$ can be used to derive $\vdf$s?

\item [A.5)] No. They suggest that all problems in $\PSPACE$ do not have
subexpoentially long sequentiality and computational soundness together.
Time-lock puzzle probably one among such rare problems that has both the properties. 
In particular, Pietrzak's $\vdf$~\cite{Pietrzak2019Simple} is known 
to be statistically sound and, of course, sequential. It is derived from time-lock
puzzle. Hence, these flawed attempts actually introduce the notion a subclass $\VDF$ 
full of such special problems within $\PSPACE$.

\end{itemize}
\bibliographystyle{splncs04}
\bibliography{ref}

\end{document}